\newtheorem{thm}{Theorem}[section]
\newtheorem{remark}{Remark}[section]
\newtheorem{ass}{Assumption}[section]
\def\bi{\begin{itemize}}
\def\ei{\end{itemize}}
\def\bn{\begin{enumerate}}
\def\en{\end{enumerate}}
\def\bq{\begin{eqnarray}}
\def\eq{\end{eqnarray}}
\def\be{\begin{equation}}
\def\ee{\end{equation}}
\def\bea{\begin{eqnarray}}
\def\eea{\end{eqnarray}}
\def\beann{\begin{eqnarray*}}
\def\eeann{\end{eqnarray*}}
\def\bsea{\begin{subeqnarray}}
\def\esea{\end{subeqnarray}}
\def\bmat{\left[ \begin{array}}
\def\emat{\end{array} \right]}
\newfont{\BB}{msbm10}
\newfont{\bb}{msbm8}
\newcommand{\bmx}{\begin{matrix}}
\newcommand{\emx}{\end{matrix}}
\newcommand{\ba}{\begin{array}}
\newcommand{\ea}{\end{array}}
\def\nn{\nonumber}
\def\bq{\begin{eqnarray}}
\def\eq{\end{eqnarray}}
\def\bsmat{\left[ \begin{smallmatrix}}
\def\esmat{\end{smallmatrix} \right]}
\title{State Estimation For An Agonistic-Antagonistic Muscle System$^\star$}
\author{Thang~Nguyen$^{1}$, Holly Warner$^{2}$, Hung~La$^{3}$, Hanieh Mohammadi$^{1}$, Dan Simon$^{1}$, and Hanz Richter$^{2}$ %
\thanks{$^{1}$ Department of Electrical Engineering and Computer Science, Cleveland State University,	Cleveland, Ohio 44115, USA}%
\thanks{$^{2}$ Department of Mechanical Engineering, Cleveland State University, Cleveland, Ohio 44115, USA}%
\thanks{$^{3}$ Advanced Robotics and Automation (ARA) Lab, Department of Computer Science and Engineering, University of Nevada, Reno, NV 89557, USA}%
\thanks{$^\star$ This  work  was  supported  by  National  Science  Foundation  grant 1544702.}
}
\begin{document}
\maketitle 

\begin{abstract}                
Research on assistive technology, rehabilitation, and prosthesis requires the understanding of human machine interaction, in which human muscular properties play a pivotal role. This paper studies a nonlinear agonistic-antagonistic muscle system based on the Hill muscle model. To investigate the characteristics of the muscle model, the problem of estimating the state variables and activation signals of the dual muscle system is considered. In this work, parameter uncertainty and unknown inputs are taken into account for the estimation problem. Three observers are presented: a high gain observer, a sliding mode observer, and an adaptive sliding mode observer. Theoretical analysis shows the convergence of the three observers. To facilitate numerical simulations, a backstepping controller is employed to drive the muscle system to track a desired trajectory. Numerical simulations reveal that the three observers are comparable and provide reliable estimates in noise free and noisy cases. The proposed schemes may serve as frameworks for estimation of complex multi-muscle systems, which could lead to intelligent exercise machines for adaptive training and rehabilitation, and adaptive prosthetics and exoskeletons.
\end{abstract}

\begin{IEEEkeywords}
    Hill muscle model, human muscles, state estimation, sliding mode observer, adaptive sliding mode, high gain observer.
\end{IEEEkeywords}

\section{INTRODUCTION}
\noindent
The development of robotics research has facilitated studies on applications in assisting human in various scenarios, see \cite{Wang_etal2013,LeamanLa2017} and references therein. In \cite{Wang_etal2013}, improved functionality in persons with certain neurological disorders was addressed. In \cite{Lin_etal2016}, human-like mechanical impedance based on the simulation of the models of the human neuromuscular system was studied. In \cite{Xiong_etal2016}, several virtual agonist-antagonist muscle mechanisms were considered in control of multilegged animal walking, where the controller is  a combination of neural control with tunable muscle-like functions. In \cite{Na_etal2016}, the estimation of joint force using a biomechanical muscle model and peaks of surface electromyography was studied.

The design of prosthetic, orthotic, and functional neuromuscular stimulation systems requires the understanding of the coordination of the human body and the dynamical properties of muscles \cite{zajac1989muscle}. The intermuscular coordination can be studied based on classical models proposed by Hill, Wilkie, and Richie \cite{zajac1989muscle}. The most widely implemented model for simulating human muscles is the Hill model \cite{winters1990hill}. More complicated models, including partial differential equation \cite{huxley1957double} or finite element \cite{yucesoy2002three} models, have been introduced to capture the complex behavior of human muscles. For a balance between accuracy and computational realizability, the Hill muscle model is a prominent solution \cite{zajac1989muscle}.

Human muscles operate at many joints. For a given joint, muscles often act in pairs with one or more muscles on opposite sides. Each member of a pair is regarded as agonist or antagonist. In this paper, an agonistic-antagonistic muscle system based on the Hill muscle model is introduced to study coordination and estimate muscle parameters.  The agonistic-antagonistic muscle system is scalable in the sense that its dynamic behavior and characteristics can be extended to multi-joint, multi-muscle, and 3D systems. In \cite{Huang2017_TSMCS}, muscular activities of a dominant antagonistic muscle pair are employed to address a computationally efficient model of the arm endpoint stiffness behavior.

%
%
%
%
%

A variety of estimation problems for different muscle models have been addressed. In \cite{buchanan2004}, muscle forces, joint moments, and/or joint kinematics are estimated from electromyogram signals using forward dynamics. In \cite{Erdemir2007}, the estimation problem of individual muscle forces during human movement is solved using forward dynamics. In \cite{Mohammed2016}, the muscular torque is estimated using a nonlinear observer in a sliding mode controller of a human-driven knee joint orthosis. In \cite{Yamasaki2016}, the estimation of muscle activity is conducted using higher-order derivatives, static optimization, and forward-inverse dynamics. In \cite{Lin2010}, an inverse dynamic optimization problem is proposed to estimate muscle and contact forces in the knee during gait. In \cite{Zhao2017_TSMCS}, the trajectory tracking control problem of one-degree of freedom  manipulator system driven by a pneumatic artificial muscle is addressed, in which a novel extended state observer based on a generalized super-twisting algorithm is employed to deal with internal uncertainties and external disturbances.

There have been numerous estimation methods proposed to observe nonlinear systems, from high gain observers to sliding mode observers; see \cite{Atassi1999, Edwards2000,davilaTAC2005,Yan2007,Alwi2009, LEE_Automatica2015,LEE_SCL2016,Hou2017_TSMCS,He2016_TSMCS} and references therein. High gain observers can offer a high level of accuracy in estimating state variables and uncertainties \cite{LEE_Automatica2015,LEE_SCL2016,He2016_TSMCS}. Sliding mode observers exhibit similar performance in estimating state variables and unknown inputs \cite{Edwards2000,Yan2007,Alwi2009,Hou2017_TSMCS}. Therefore, sliding mode observers, which are based on sliding mode control, can be employed to address many problems in fault detection and isolation, in which important parameters such as state variables, faults or unknown inputs need to be reconstructed from the available information. While traditional sliding mode techniques require the knowledge of unknown inputs and uncertainties, recent adaptive sliding mode control methods have been developed to overcome this limit at the cost of complexity \cite{EDWARDS_Automatica2016,Edwards_IJC2016}.

Muscle systems are important in assistive technology, rehabilitation, and prosthesis related research, which involves human-machine interactions. In this paper, we aim to design a high gain observer, a conventional sliding mode observer, and a new adaptive sliding mode observer for our dual muscle system. The benefits of accurate state estimation for the agonistic-antagonistic muscle model offer useful frameworks to investigate several problems in human-machine interactions such as monitoring of human health state and gait analysis \cite{nguyen2015dynamic,juen2014health,azulay1996automatic}, 3-D human skeleton localization \cite{YuanChen2013}, human foot localization \cite{NguyenLa2017,NguyenLa_ACC2016}, artificial muscles \cite{Zhao2017_TSMCS}, etc.

The contribution of our research work lies in the construction and development of a high gain observer, a sliding mode observer, and an adaptive sliding mode observer for the agonistic-antagonistic muscle system where unknown inputs are taken into account. Our problem is more general than the works in \cite{He2016_TSMCS,Zhao2017_TSMCS}, in which unknown input estimation is not considered, and more general than \cite{Hou2017_TSMCS}, where modeling uncertainties are not taken into account. The high gain observer is designed based on recent results in \cite{LEE_Automatica2015,LEE_SCL2016}, which allows to estimate state variables and unknown inputs, from which activation signals are constructed. The conventional sliding mode observer is built based on the first order sliding mode and super-twisting algorithm developed in \cite{davilaTAC2005,edwards1998book}, for which bounds of unknown control inputs and uncertainty needs to be known. The third observer is developed based on recent results on dual layer adaptive sliding mode control \cite{EDWARDS_Automatica2016,Edwards_IJC2016}, which does not require knowledge of the bounds of unknown inputs and uncertainty.


The rest of the paper is organized as follows. Section \ref{DynamicModel} presents the problem formulation. Section \ref{Observer} introduces three observers to estimate state variables and activation signals. Section \ref{Example} shows numerical simulations to demonstrate the effectiveness of the proposed schemes, where Subsection  \ref{Control} presents a backstepping controller for the tracking control problem. Section \ref{Conclusions} concludes the paper.

\section{PROBLEM FORMULATION}\label{DynamicModel}
We study the agonistic-antagonistic muscle system where each muscle is based on the Hill muscle model \cite{zajac1989muscle}. The Hill muscle unit models several effects of the physical muscle. It is divided into two sections, the tendon and the muscle body. The tendon is modeled as a nonlinear stiffness that includes some amount of slack. Within the muscle body portion of the model, a nonlinear stiffness element, modeled similar to the tendon, and a force generation element are oriented in parallel. The tendon and muscle body components are then placed in series. The structure of the dual muscle system is described in Fig. \ref{syst0}, where the abbreviations $CE$, $SEE$, and $PE$ stand for the contractile, series elastic, and parallel elastic elements of the Hill muscle model. Because muscles can only apply force when contracting, two muscles are required to actuate the central mass $m$, which is a simple load selected for studying the fundamental dynamics of this system.

The lengths of the $CE$ and $SEE$ are denoted as $L_{Cj}$ and $L_{Sj}$ for muscle $j$ $(j=1,2)$, and the total length of the $j$th muscle is defined by
\be\label{Lmj}
	L_{mj}=L_{Cj}+L_{Sj}.
\ee
Let $L_{m1}$ be the position of the mass in Fig. \ref{syst0}, and the corresponding velocity is positive to the right.

\begin{figure}[t]
	\centering
	\includegraphics{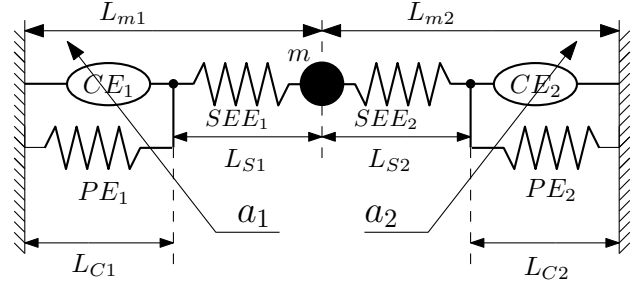}
	\caption{Two-muscle, one degree-of-freedom agonistic-antagonistic system with mass load  \cite{RichterWarnerIFAC17}.}
	\label{syst0}
\end{figure}

The dual muscle system possesses the following dynamics \cite{WarnerRichterTR2016,RichterWarnerIFAC17}
\bea
	\dot{x}_1 &=& x_2 \label{load1} \\
	\dot{x}_2 &=& \frac{1}{m}(\Phi_{S2}(L_{S2})-\Phi_{S1}(L_{S1})) +\Delta_{\Phi}(\tau) \label{load2} \\
	\dot{L}_{S1}&=& x_2+g_1^{-1}(z_1) \label{lsdot1} \\
	\dot{L}_{S2}&=& -x_2+g_2^{-1}(z_2) \label{lsdot2}
\eea
where
\bea
	x_1 &\triangleq &L_{m1},\\
	z_j&=&\frac{\Phi_{Sj}(L_{Sj})-\Phi_{Pj}(L_{Cj})}{a_jf_j(L_{Cj})} \text{ for } j=1,2, \label{z}
\eea
where $\Phi_{Sj}$ is the elastic force, $\Phi_{Pj}$ is the parallel elastic force, $a_j$ is the activation signal of element $j$ with $a_j\in[0,1]$, and $\Delta_{\Phi}(\tau)$ is a bounded uncertainty. The force-length dependence factor $f_j$ has the general shape of a Gaussian curve, and the velocity dependence function $g_j^{-1}(z_j)$ obeys the Hill model:
 \bea
 	f_j(L_{Cj})&=&exp[{-\left(\dfrac{L_{Cj}-1}{W}\right)^2}] \label{fj}\\
 	g_j^{-1}(z_j)&=&
 	\begin{cases}
 		\dfrac{1-z_j}{1+z_j/A},&z_j\leq1\\
 		\dfrac{-A(z_j-1)(g_{max}-1)}{(A+1)(g_{max}-z_j)},&z_j>1
 	\end{cases}\label{gj}
 \eea
where $W$, $A$, and $g_{max}$ are positive parameters.
Denote
\bea
	u_1&\triangleq &g_1^{-1}(z_1)\\
	u_2&\triangleq &g_2^{-1}(z_2)
\eea
as the virtual control inputs of the system (\ref{load1}), (\ref{load2}), (\ref{lsdot1}), (\ref{lsdot2}).

We have the following assumptions for our system.
\begin{ass}\label{As_Delta}
The uncertainty $\Delta_{\Phi}(\tau)$ satisfies
\be\label{Deltaphi}
	|\Delta_{\Phi}(\tau)|<\Delta_m
\ee
where $\Delta_m$ is a positive constant.
\end{ass}
\begin{remark}
$\Delta_{\Phi}(\tau)$ can represent parameter uncertainties due to model mismatch. For example, uncertainties in the description of $\Phi_{Sj}(L_{Sj})$ and the mass $m$.
\end{remark}
\begin{ass}\label{As_input}
The control inputs of the system (\ref{load1}), (\ref{load2}), (\ref{lsdot1}), (\ref{lsdot2}) satisfy
\be\label{Ujm}
	|u_j(\tau)|<U_{jm} \text{ for } j=1,2
\ee
where $U_{jm}$ is a positive constant.
\end{ass}
The length constraint of the dual muscle system is given by
\be\label{Lmi2}
	L_{m1}+L_{m2} =C
\ee
where $C$ is a constant. Hence, $L_{C1}$ and $L_{C2}$ will be determined from the relations in (\ref{Lmj}) and (\ref{Lmi2}) if $C$, $L_{S1}$, $L_{S2}$, and $L_{m1}$ are available. Therefore, it is sufficient to consider four differential equations of the model in (\ref{load1}), (\ref{load2}), (\ref{lsdot1}), and (\ref{lsdot2}) for our estimation problem. From (\ref{Lmj}), (\ref{lsdot1}), (\ref{lsdot2}), and  (\ref{Lmi2}), the dynamics of $L_{C1}$ and $L_{C2}$ are described as
\bea
	\dot{L}_{C1}&=&-g_1^{-1}(z_1) \label{lcdot1} \\
	\dot{L}_{C2}&=& -g_2^{-1}(z_2) \label{lcdot2}.
\eea

The nonlinear functions $\Phi_{Sj}$, $\Phi_{Pj}$, $f_j$, and $g_j^{-1}$ ($j=1,2$) can be found in \cite{WarnerRichterTR2016,RichterWarnerIFAC17}. All the variables and functions of the dual muscle system are normalized to simplify the dynamics. A candidate of $\Phi_{Sj}(L_{Sj})$ is chosen as \cite{WarnerRichterTR2016}
\small
\begin{align}
\nn&\Phi_{Sj}(L_{Sj})=\\
&\begin{cases}0& \text{$L_{Sj}<2$}\\
6760794.14(L_{Sj})^5-68434261.19(L_{Sj})^4\\
+277072371.99(L_{Sj})^3  -560875494.46(L_{Sj})^2\\
+567666340.97(L_{Sj})-229806913.40&\text{$2\leq L_{Sj}<2.04$}\\
 0.5 + 19.2308(L_{Sj}- 2.04)&\text{$L_{Sj}\geq 2.04$}\end{cases}\label{PhiSj}
\end{align}
\normalsize
whose graph is shown in Fig. \ref{Phi_fig}.
This function has the general shape of the tendon force-length characteristic, including slack. The piecewise polynomial in the expression of $\Phi_{Sj}(L_{Sj})$ is continuous up to the second derivative. An example of $\Phi_{Pj}$ is given as \cite{WarnerRichterTR2016}
\be
	\Phi_{Pj}(L_{Cj})=
 	\begin{cases}
 		0,&L_{Cj}<1\\
 		8\,(L_{Cj})^3-24\,(L_{Cj})^2+24\,L_{Cj}-8,&L_{Cj}\geq1.
 	\end{cases}\label{PhiPj}
\ee
\begin{remark}
The function $\Phi_{Sj}(L_{Sj})$ in (\ref{PhiSj}) is just one possibility to capture the stress-strain curve of a tendon. The shape of $\Phi_{Sj}(L_{Sj})$ can be built from data extracted from experiments. Note that the exact shape of $\Phi_{Sj}(L_{Sj})$ is not important as long as this function is known to controllers and observers.
\end{remark}

\begin{figure}[t]
	\centering
	\includegraphics[width=\linewidth]{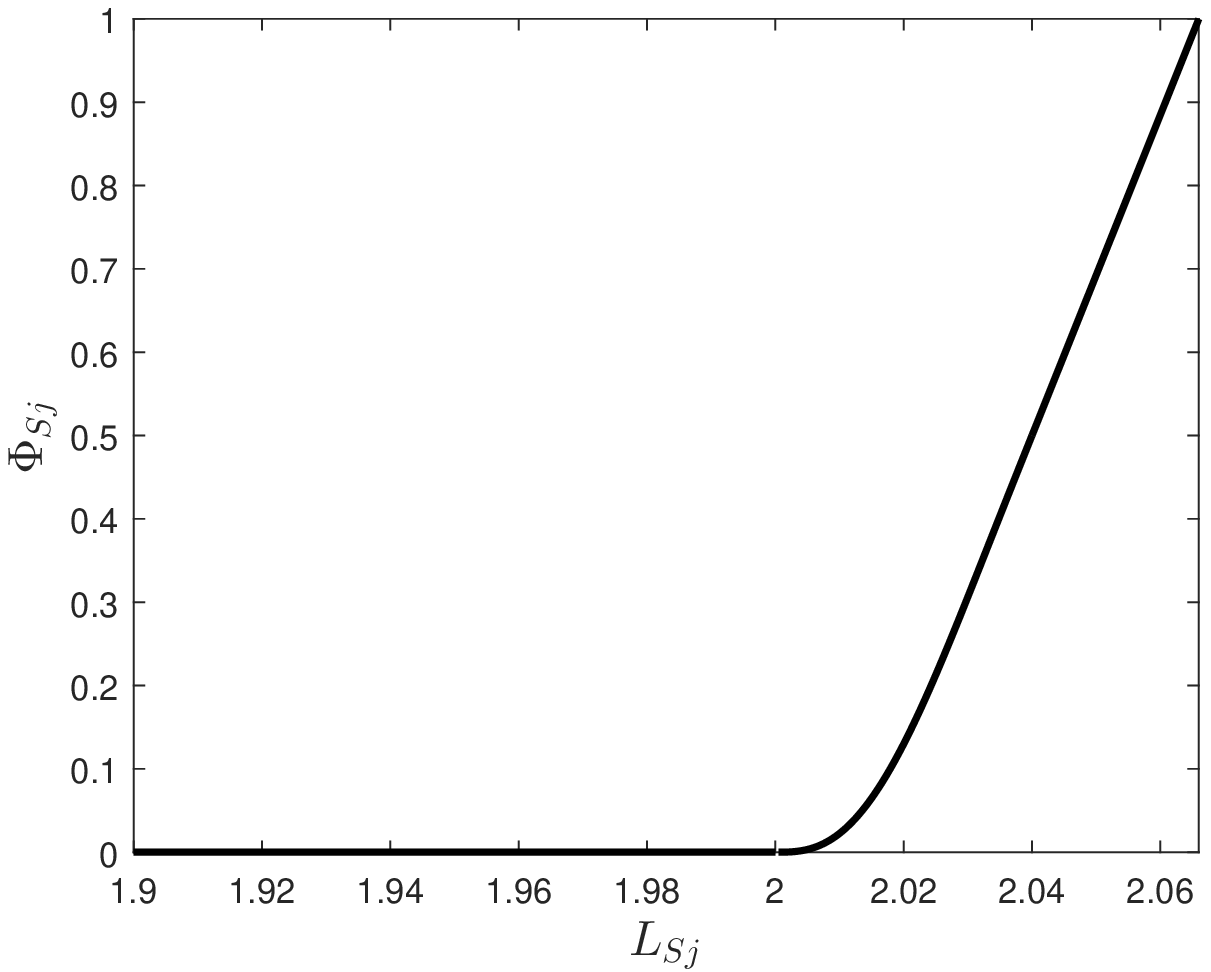}
	\caption{The graph of function $\Phi_{S\lowercase{j}}$ described in (\ref{PhiSj}).}
	\label{Phi_fig}
\end{figure}

Assume that $x_1$, $\Phi_{S1}(L_{S1})$, and $\Phi_{S2}(L_{S2})$ are available for measurement. The mass position can be tracked by a sensor while the $SEE$ nonlinear spring forces $\Phi_{S1}(L_{S1})$ and $\Phi_{S2}(L_{S2})$ of the agonistic-antagonistic muscles can be measured by two load cells, from which $L_{Sj}$ is inferred due to the inverse  of $\Phi_{Sj}(L_{Sj})$.
The observability matrix of the dual muscle system can be calculated using the Lie derivatives of the outputs, and it has rank 4, implying that the dual muscle system is locally observable \cite{HermannKrener1977}.

For ease of presentation, let

\bea
	x_3&\triangleq &L_{S1}\\
	x_4&\triangleq &L_{S2}.
\eea

Due to the relations (\ref{Lmj}) and (\ref{Lmi2}), $L_{Cj}$ can be deduced from $L_{mj}$ and $L_{Sj}$. Our system is rewritten as
\bea
	\dot{x}_1 &=& x_2 \label{x1d} \\
	\dot{x}_2 &=& \frac{1}{m}(\Phi_{S2}(x_4)-\Phi_{S1}(x_3)) +\Delta_{\Phi}(\tau) \label{x2d} \\
	\dot{x}_3&=& x_2+u_1(x) \label{x3d} \\
	\dot{x}_4&=& -x_2+u_2(x) \label{x4d}\\
	y_1&=&x_1\label{y1}\\
	y_2&=&\Phi_{S1}(x_3)\label{y2}\\
	y_3&=&\Phi_{S2}(x_4)\label{y3}
\eea
where
\be
	x=\left[\begin{matrix}x_1\\x_2\\x_3\\x_4\end{matrix}\right],
\ee
and the vector
\be
	y=\left[\begin{matrix}y_1\\y_2\\y_3\end{matrix}\right]
\ee
is the output of the dual muscle system. Note that from the measurement of $y_2$ and $y_3$, $x_3$ and $x_4$ can be calculated due to the inverse of the function $\Phi_{Sj}(L_{Sj})$ in (\ref{PhiSj}).
Let
\be
	u=\left[\bmx u_1\\u_2\emx\right].
\ee
Given the measurements of the length of the agonistic muscle and muscle forces, we study the estimation problem of state and activation signals. Due to the relation (\ref{z}), it is sufficient to estimate the state and unknown inputs of the system (\ref{x1d}) - (\ref{y3}).

\section{OBSERVER DESIGN}\label{Observer}
In this section, we introduce three methods to estimate the state variables and the activation signals: a high gain observer, a sliding mode observer, and an adaptive sliding observer.
Denote the estimates of $x$, $u$, and $a=[a_1,a_2]^T$ as
\bea
\hat{x}&=&\left[\bmx\hat{x}_1\\\hat{x}_2\\\hat{x}_3\\\hat{x}_4\emx\right]\\
\hat{u}&=&\left[\bmx\hat{u}_1\\\hat{u}_2\emx\right]\\
\hat{a}&=&\left[\bmx\hat{a}_1\\\hat{a}_2\emx\right].
\eea

\subsection{HIGH GAIN OBSERVER}
\label{HGO}
The high gain observer in this subsection is designed based on the extended high gain observer approach reported in \cite{LEE_Automatica2015,LEE_SCL2016}. The structure of the proposed high gain observer is described as
\bea
 	\dot{\hat{x}}_1&=&\hat{x}_2 +\frac{h_{11}}{\epsilon_h}(y_1-\hat{x}_1)\label{HO1a}\\
\nn	\dot{\hat{x}}_2&=&\frac{1}{m}(y_3-y_2) +\hat{\Delta}_{\Phi}(t) \\
&&+\frac{h_{12}}{\epsilon^2_h}(y_1-\hat{x}_1)\label{HO1b}\\
	\dot{\hat{\Delta}}_\Phi&=&\frac{h_{13}}{\epsilon^3_h}(y_1-\hat{x}_1)\label{HO1c}\\
	\dot{\hat{x}}_3 &=& \hat{x}_2+\hat{u}_1+\frac{h_{21}}{\epsilon_h}(\Phi_{S1}^{-1}(y_2)-\hat{x}_3) \label{HO2a} \\
	\dot{\hat{u}}_1&=&\frac{h_{22}}{\epsilon^2_h}(\Phi_{S1}^{-1}(y_2)-\hat{x}_3) \label{HO2b} \\
	\dot{\hat{x}}_4 &=& -\hat{x}_2+\hat{u}_2+\frac{h_{31}}{\epsilon_h}(\Phi_{S2}^{-1}(y_3)-\hat{x}_4) \label{HO3a}\\
	\dot{\hat{u}}_2&=&\frac{h_{32}}{\epsilon^2_h}(\Phi_{S2}^{-1}(y_3)-\hat{x}_4), \label{HO3b}
\eea
where $\epsilon_h\in (0,1)$ is a design parameter, parameters $h_{11}$, $h_{12}$, $h_{13}$ are chosen such that the polynomial $s^3+h_{11}s^2+h_{12}s+h_{13}$ is Hurwitz, parameters $h_{ij}$ for $i=2,3$ and $j=1,2$ are chosen such that the polynomials
$s^2+h_{i1}s+h_{i2}$ are Hurwitz for $i=2,3$ \cite{LEE_Automatica2015}.
\begin{thm}\label{thm_ho}
Under Assumptions \ref{As_Delta} and \ref{As_input}, the state and input estimates of the high gain observer presented in (\ref{HO1a}) - (\ref{HO3b}) satisfy
\be
	\|x(\tau)-\hat{x}(\tau)\|\rightarrow 0
\ee
and
\be
	\|u_j(\tau)-\hat{u}_j(\tau)\|\rightarrow 0 \text{ for } j=1,2
\ee
as $\epsilon_h\rightarrow 0$ for $\tau\geq 0$.
\end{thm}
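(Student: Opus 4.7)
The plan is to recognize that the observer (\ref{HO1a})--(\ref{HO3b}) decouples into three independent extended high gain observer (EHGO) subblocks, each exactly of the form analyzed in \cite{LEE_Automatica2015,LEE_SCL2016}: a third-order block observing $(x_1,x_2,\Delta_\Phi)$ from the output $y_1=x_1$, and two second-order blocks observing $(x_3,u_1)$ and $(x_4,u_2)$ from the (invertible) outputs $y_2$ and $y_3$, respectively. Because the three blocks are driven only by measured outputs and their own estimates, I would analyze each one in isolation and then assemble the results.

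First I would introduce the standard scaled error coordinates for each block. For the first block, set $\eta_{11}=(x_1-\hat x_1)/\epsilon_h^{2}$, $\eta_{12}=(x_2-\hat x_2)/\epsilon_h$, $\eta_{13}=\Delta_\Phi-\hat\Delta_\Phi$, and for the two input-estimation blocks set $\zeta_{j1}=(x_{j+2}-\hat x_{j+2})/\epsilon_h$, $\zeta_{j2}=u_j-\hat u_j$ for $j=1,2$. A direct computation turns the error dynamics into the singularly perturbed form
\begin{equation}
\epsilon_h\dot\eta = A_1\eta + \epsilon_h B_1 \dot\Delta_\Phi,\qquad
\epsilon_h\dot\zeta_j = A_2\zeta_j + \epsilon_h B_2 \dot u_j,
\end{equation}
where $A_1$ is the companion matrix with characteristic polynomial $s^3+h_{11}s^2+h_{12}s+h_{13}$ and $A_2$ is the companion matrix with characteristic polynomial $s^2+h_{i1}s+h_{i2}$; by hypothesis both are Hurwitz.

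Next, I would invoke the standard Lyapunov argument for each block: choose $P_1,P_2\succ 0$ solving $A_k^\top P_k+P_kA_k=-I$, and use $V=\eta^\top P_1\eta+\sum_j\zeta_j^\top P_2\zeta_j$. Along trajectories one gets $\epsilon_h\dot V\le -\|\eta\|^2-\sum_j\|\zeta_j\|^2 + 2\epsilon_h(\|P_1B_1\|\,|\dot\Delta_\Phi|+\sum_j\|P_2B_2\|\,|\dot u_j|)\|(\eta,\zeta)\|$, which yields an ultimate bound of order $\epsilon_h$ on the scaled errors, hence bounds of order $\epsilon_h$ on $x-\hat x$ and $u_j-\hat u_j$. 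Letting $\epsilon_h\to 0$ and noting that the argument is uniform in $\tau\geq T(\epsilon_h)$ with $T(\epsilon_h)\to 0$ delivers the two convergence statements of the theorem.

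The main obstacle, and the place where care is needed, is that Assumptions \ref{As_Delta}--\ref{As_input} only bound $\Delta_\Phi$ and $u_j$, not their time derivatives, whereas the perturbation terms $B_1\dot\Delta_\Phi$ and $B_2\dot u_j$ appear explicitly in the scaled error equation. I would handle this by strengthening the working hypothesis (as is standard in the EHGO literature \cite{LEE_Automatica2015,LEE_SCL2016}) to bounded $\dot\Delta_\Phi$ and $\dot u_j$ on the trajectories of interest; physically this is justified because $u_j=g_j^{-1}(z_j)$ is generated by a smooth backstepping controller driving a smooth reference, and $\Delta_\Phi$ models slowly varying parameter mismatch. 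A secondary technical point is the usual peaking phenomenon: a standard saturation of $\hat x_2$, $\hat u_j$, $\hat\Delta_\Phi$ outside a known operating region prevents the $1/\epsilon_h^k$ initial transients from destabilizing the plant during the $O(\epsilon_h|\ln\epsilon_h|)$ boundary layer, after which the Lyapunov estimate above applies and gives the claimed asymptotic behavior.
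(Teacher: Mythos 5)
Your proposal is correct and takes essentially the same route as the paper: the paper's own proof is a one-sentence citation to the extended high gain observer construction of \cite{LEE_Automatica2015,LEE_SCL2016}, and your argument is exactly that construction carried out explicitly (block decoupling, scaled error coordinates, singularly perturbed Hurwitz error dynamics, and the $O(\epsilon_h)$ ultimate bound). Your observation that Assumptions \ref{As_Delta} and \ref{As_input} bound only $\Delta_\Phi$ and $u_j$ while the error equations involve $\dot\Delta_\Phi$ and $\dot u_j$ --- so that bounded derivatives must be added as a working hypothesis --- is a genuine and correctly handled point that the paper glosses over.
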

\begin{proof}
The proof is based on the construction of the extended high gain observer in \cite{LEE_Automatica2015,LEE_SCL2016}.
\end{proof}
\begin{remark}
Theorem \ref{thm_ho} states that if $\epsilon_h\rightarrow 0$, the state and unknown input estimates will be exactly the true values. Since $\epsilon_h\ne 0$, a practical choice of $\epsilon_h$ lies in the interval $(0,1)$.
\end{remark}
\begin{remark}
The proposed high gain observer requires the tuning of nine parameters: $\epsilon_h$, $h_{11}$, $h_{12}$, $h_{13}$, $h_{21}$, $h_{22}$, $h_{31}$, $h_{32}$.
\end{remark}

\subsection{SLIDING MODE OBSERVER}
\label{SMO}

Following the super twisting algorithm and the traditional sliding mode approach in \cite{davilaTAC2005,edwards1998book}, the sliding mode observer for our system possesses the following structure:
\bea
	\dot{\hat{x}}_1&=&\hat{x}_2 +v_{11}\label{SO1a}\\
	\dot{\hat{x}}_2&=&\frac{1}{m}(y_3-y_2)+v_{12}\label{SO1b}\\
	\dot{\hat{x}}_3 &=& \hat{x}_2+v_2 \label{SO2a}\\
	\dot{\hat{x}}_4 &=& -\hat{x}_2+v_3 \label{SO3a}
\eea
where
\bea
	v_{11}&=& \lambda_{11}\, |y_1-\hat{x}_1| ^{1/2}\mathrm{sign}(y_1-\hat{x}_1) \label{v11}\\
	v_{12}&=& \alpha_{11} \,\mathrm{sign}(y_1-\hat{x}_1)\label{v12}\\
	v_2&=& \alpha_2 \,\mathrm{sign}(\Phi_{S1}^{-1}(y_2)-\hat{x}_3)\label{v2}\\
	v_3&=& \alpha_3 \,\mathrm{sign}(\Phi_{S2}^{-1}(y_3)-\hat{x}_4)\label{v3}.
\eea
Here $\lambda_{11}$ and $\alpha_{11}$ are design parameters which can be chosen to satisfy the following inequalities \cite{davilaTAC2005}:
\bea
	\alpha_{11}&>&f^+ \label{alpha11} \\
	\lambda_{11} &>& \sqrt{\frac{2}{\alpha_{11}-f^+}}\frac{(\alpha_{11}+f^+)(1+p)}{(1-p)}\label{lambda11}
\eea
where $p$ is a positive constant such that $0<p<1$, $f^+>0$ is the upperbound of $\Delta_\Phi$: $|\Delta_\Phi|<f^+$. The parameters $\lambda_{11}$ and $\alpha_{11}$ can also be taken according to \cite{MorenoTAC2012}. The parameters $\alpha_2$ and $\alpha_3$ in (\ref{v2}) and (\ref{v3}) are chosen such that \cite{edwards1998book}
\bea
	U_{1m}&<&\alpha_2\label{alpha2}\\
	U_{2m}&<&\alpha_3\label{alpha3}
\eea
where $U_{1m}$ and $U_{2m}$ are defined in (\ref{Ujm}).
The reconstruction of the uncertainty $\Delta_{\Phi}$ and unknown inputs $u_1$ and $u_2$ is accomplished with low pass filters given as
\bea
	\tau_s \, \dot{\hat{\Delta}}_\Phi&=&-\hat{\Delta}_{\Phi}+v_{12} \label{SO1c}\\
	\tau_s\, \dot{\hat{u}}_1&=&-\hat{u}_1+v_2 \label{SO2b}\\
	\tau_s\, \dot{\hat{u}}_2&=&-\hat{u}_2+v_3 \label{SO3b}
\eea
where $\tau_s$ is a positive parameter.

We have the following result.
\begin{thm}\label{thm_so}
Under Assumptions \ref{As_Delta} and \ref{As_input}, there exists a positive number $\tau^\star$ such that the state and input estimates of the high gain observer presented in (\ref{SO1a}) - (\ref{SO3a}) and (\ref{SO1c}) - (\ref{SO3b}) satisfy
\be
	x(\tau)-\hat{x}(\tau)= 0
\ee
and
\be
	u(\tau)\rightarrow\hat{u}(\tau)
\ee
for $\tau\geq \tau^\star$.
\end{thm}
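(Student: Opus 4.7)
The plan is to introduce estimation errors $e_i \triangleq x_i - \hat{x}_i$ for $i=1,\dots,4$, establish finite-time convergence of the four state errors to zero, and then invoke the equivalent-output-injection principle to argue that the low-pass filters (\ref{SO1c})--(\ref{SO3b}) reconstruct $\Delta_\Phi$ and $u$ asymptotically. Subtracting (\ref{SO1a})--(\ref{SO3a}) from (\ref{x1d})--(\ref{x4d}) and using $\Phi_{S1}^{-1}(y_2) = x_3$, $\Phi_{S2}^{-1}(y_3) = x_4$ (valid because $\Phi_{Sj}$ is invertible on its active branch), the error dynamics reduce to $\dot{e}_1 = e_2 - v_{11}$, $\dot{e}_2 = \Delta_\Phi(\tau) - v_{12}$, $\dot{e}_3 = u_1 - v_2$, and $\dot{e}_4 = u_2 - v_3$.

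The $(e_1,e_2)$ subsystem is precisely the canonical super-twisting form of \cite{davilaTAC2005}. Under Assumption \ref{As_Delta} the perturbation $\Delta_\Phi$ is bounded by $f^+ = \Delta_m$, so the gain selections (\ref{alpha11})--(\ref{lambda11}) yield finite-time convergence $e_1 \equiv e_2 \equiv 0$ after some time $\tau_a^\star$; this is a direct appeal to the theorem in \cite{davilaTAC2005} (or alternatively the strict Lyapunov construction of \cite{MorenoTAC2012}). For each of $e_3$ and $e_4$, I would use $V_j = \tfrac{1}{2}e_j^2$ and compute $\dot{V}_3 = e_3 u_1 - \alpha_2 |e_3| \leq -(\alpha_2 - U_{1m})|e_3|$ under Assumption \ref{As_input} and (\ref{alpha2}); the analogous inequality with $\alpha_3$ handles $e_4$. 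Since the right-hand side is bounded above by $-\kappa \sqrt{2V_j}$ for $\kappa>0$, each error vanishes in finite time, say $\tau_b^\star$. Setting $\tau^\star \triangleq \max\{\tau_a^\star,\tau_b^\star\}$ yields $x(\tau) - \hat{x}(\tau) = 0$ for all $\tau \geq \tau^\star$.

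For the input reconstruction, once the sliding motion is established the error dynamics force the equivalent values of the switching terms to match the unknown signals: from $\dot{e}_2 \equiv 0$ one reads $v_{12}|_{\mathrm{eq}} = \Delta_\Phi$, and from $\dot{e}_3 \equiv 0$, $\dot{e}_4 \equiv 0$ one reads $v_2|_{\mathrm{eq}} = u_1$ and $v_3|_{\mathrm{eq}} = u_2$. The filters (\ref{SO1c})--(\ref{SO3b}) then act as extractors of these averages: their inputs contain the true signals plus a zero-mean high-frequency switching component, and with $\tau_s$ chosen small enough the filters pass the former while attenuating the latter, giving $\hat{u}(\tau) \to u(\tau)$ (and likewise $\hat{\Delta}_\Phi \to \Delta_\Phi$) for $\tau \geq \tau^\star$.

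The principal obstacle is making the last step rigorous: the equivalent-injection interpretation is a Filippov-type average that holds only in the limit of ideal sliding, while in the filter analysis one needs a quantitative bound relating the filter time constant $\tau_s$ to the (unknown) rates of change of $u_1$, $u_2$, and $\Delta_\Phi$. I would mirror the treatment in \cite{edwards1998book}: treat $\tau_s$ as a singular perturbation parameter, show boundedness of the filter error for any finite $\tau_s$ under a slew-rate assumption on the signals, and let $\tau_s \downarrow 0$ to obtain the asymptotic convergence claimed in the statement. The non-smoothness of $\Phi_{Sj}$ at the knots of the piecewise polynomial (\ref{PhiSj}) would need a brief justification that $\Phi_{Sj}^{-1}$ is still continuously invertible on the operating range so that the output-error expressions used above are valid.
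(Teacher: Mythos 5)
Your overall route is the same as the paper's: finite-time convergence of $(e_1,e_2)$ via the super-twisting result of \cite{davilaTAC2005}, finite-time convergence of $e_3,e_4$ via a first-order sliding-mode argument, and reconstruction of $\Delta_\Phi$, $u_1$, $u_2$ through the low-pass filters using the equivalent-output-injection principle of \cite{edwards1998book}. Your discussion of the filtering step is in fact more careful than the paper's, which simply cites \cite{davilaTAC2005,edwards1998book} for that part.

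However, there is a concrete error in your derivation of the error dynamics that breaks the second step as written. Subtracting (\ref{SO2a}) from (\ref{x3d}) gives $\dot{e}_3 = (x_2 - \hat{x}_2) + u_1 - v_2 = e_2 + u_1 - v_2$, and similarly $\dot{e}_4 = -e_2 + u_2 - v_3$; you have dropped the coupling terms $\pm e_2$. With those terms present, your Lyapunov estimate becomes $\dot{V}_3 \leq -(\alpha_2 - U_{1m} - |e_2|)\,|e_3|$, and the gain condition (\ref{alpha2}), namely $\alpha_2 > U_{1m}$, does \emph{not} guarantee contraction while $e_2 \neq 0$. Consequently $e_3$ and $e_4$ cannot be shown to converge in parallel with the $(e_1,e_2)$ subsystem, and your definition $\tau^\star = \max\{\tau_a^\star,\tau_b^\star\}$ rests on a false premise. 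The repair is exactly what the paper does: first obtain $\tau_1^\star$ such that $e_1 \equiv e_2 \equiv 0$ for $\tau \geq \tau_1^\star$, separately verify that $e_3$ and $e_4$ remain bounded on $[0,\tau_1^\star]$ (a step your version does not need but the corrected one does), and only then run your first-order sliding-mode argument on $\dot{e}_3 = u_1 - v_2$ and $\dot{e}_4 = u_2 - v_3$, valid for $\tau \geq \tau_1^\star$, to produce $\tau_2^\star, \tau_3^\star \geq \tau_1^\star$ and set $\tau^\star = \max\{\tau_1^\star,\tau_2^\star,\tau_3^\star\}$. With that sequencing restored, the rest of your argument, including the singular-perturbation treatment of $\tau_s$ and the remark on invertibility of $\Phi_{Sj}$, goes through.
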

\begin{proof}
The proof follows the super-twisting algorithm and the standard sliding mode in \cite{davilaTAC2005,edwards1998book}. Let
\be
	e=x-\hat{x}.
\ee
The state estimation dynamics are
\bea
	\dot{e}_1&=&e_2-v_{11} \label{de1}\\
	\dot{e}_2&=&\Delta_\Phi(\tau)-v_{12} \label{de2}\\
	\dot{e}_3&=&e_2+u_1+v_2 \label{de3}\\
	\dot{e}_4&=&-e_2+u_2+v_3 \label{de4}.
\eea
According to \cite{davilaTAC2005}, there exists a number $\tau^{\star}_1>0$ such that $e_1(\tau)=0$ and $e_2(\tau)=0$ for $\tau\geq \tau^{\star}_1$. It is easy to show that $e_3$ and $e_4$ are bounded in the interval $[0,\tau^{\star}_1]$. Since the error dynamics of $e_3$ is the first order sliding mode for $\tau\ge \tau^{\star}_1$, there exists a number $\tau^{\star}_2\geq\tau^\star_1$ such that  $e_3(\tau)=0$ for $\tau\geq \tau^{\star}_2$ \cite{edwards1998book}. Using the same argument, there exists a number $\tau^\star_3\geq \tau^\star_1$ such that $e_4(\tau)=0$ for $\tau\geq \tau^{\star}_3$ . Therefore, $e=0$ for $\tau\geq \tau^\star=\max \{\tau^\star_1,\tau^\star_2,\tau^\star_3\}$.

According to \cite{davilaTAC2005,edwards1998book}, the injection signals $v_{12}$, $v_2$, and $v_3$ are employed to estimate $\Delta_\Phi$, $u_1$, and $u_2$ in (\ref{SO1c}), (\ref{SO2b}), (\ref{SO3b}), from which $\hat{\Delta}_{\Phi}\rightarrow \Delta_{\Phi}$ and $\hat{u}\rightarrow u$.
\end{proof}
\begin{remark}
A practical implementation of the sign function of the sliding mode observer is done using the following approximation:
\be\label{deltas}
	\mathrm{sign}(e) \approx \frac{e}{\delta_s +|e|},
\ee
which adds another design parameter for the observer, namely $\delta_s$.
\end{remark}

\begin{remark}
The proposed sliding mode observer requires the tuning of six parameters: $\lambda_{11}$, $\alpha_{11}$, $\alpha_2$, $\alpha_3$, $\tau_s$, $\delta_s$.
\end{remark}
\begin{remark}
The parameters of the sliding mode observer depend explicitly on the information of the bounds of the unknown inputs and uncertainty.
\end{remark}
\subsection{ADAPTIVE SLIDING MODE OBSERVER}
\label{ASMO}
The adaptive sliding mode observer for our system is designed based on the dual layer nested adaptive approaches in \cite{EDWARDS_Automatica2016,Edwards_IJC2016}. The proposed adaptive sliding mode observer is given as follows:
\bea
\nn	\dot{\hat{x}}_1&=&\hat{x}_2 +\alpha_a(\tau)\, |y_1-\hat{x}_1| ^{1/2}\mathrm{sign}(y_1-\hat{x}_1)\\
&&-\phi(y_1-\hat{x}_1,L_a)\label{AO1a}\\
	\dot{\hat{x}}_2&=&\beta_a(\tau)\mathrm{sign}(y_1-\hat{x}_1) \label{AO1b}\\
	\dot{\hat{\Delta}}_\Phi&=&\frac{1}{\tau_a }(-\hat{\Delta}_{\Phi}-\beta_a(\tau)\mathrm{sign}(y_1-\hat{x}_1)  \label{AO1c}\\
	\dot{\hat{x}}_3 &=& (k_1(\tau)+\eta_1)\mathrm{sign}(\Phi_{S1}^{-1}(y_2)-\hat{x}_3)\label{AO2a}\\
	\dot{\hat{u}}_1&=&\frac{1}{\tau_a }(-\hat{u}_1-(k_1(\tau)+\eta_1)\mathrm{sign}(\Phi_{S1}^{-1}(y_2)-\hat{x}_3)) \label{AO2b}\\
	\dot{\hat{x}}_4 &=& (k_2(\tau)+\eta_2)\mathrm{sign}(\Phi_{S2}^{-1}(y_3)-\hat{x}_4) \label{AO3a}\\
	\dot{\hat{u}}_2&=&\frac{1}{\tau_a }(-\hat{u}_2-(k_2(\tau)+\eta_2)\mathrm{sign}(\Phi_{S2}^{-1}(y_3)-\hat{x}_4)) \label{AO3b}
\eea
where $\tau_a$, $\eta_1$, and $\eta_2$ are positive design parameters,
\bea
	\alpha_a(\tau)&=&\sqrt{L_a(\tau)}\,\alpha_0\\
	\beta_a(\tau)&=&L_a(\tau)\,\beta_0,
\eea
where $\alpha_0$ and $\beta_0$ are fixed positive scalars and
\be
	\phi(e_1,L_a)=-\frac{\dot{L}_a(\tau)}{L_a(\tau)}\, e_1(\tau).
\ee
Define
\be\label{deltaa0}
	\delta_{a0}(\tau)=L_a(\tau)-\frac{1}{a\beta_0} \, |\hat{\Delta}_\Phi|-\epsilon_a
\ee
where $a$ is chosen such that $0<a<1/\beta_0<1$ and $\epsilon_a$ is a small positive scalar chosen to satisfy
\be
	\frac{1}{a\beta_0} \, |\hat{\Delta}_\Phi|+\epsilon_a/2> |\Delta_\Phi|.
\ee
The proposed adaptive element $L_a(\tau)$ is given by
\be\label{La}
	L_a(\tau)=l_{0}+l_a(\tau)
\ee
where $l_0$ is a small positive design constant and
\be\label{dotla}
	\dot{l}_a(\tau)=-\rho_{a0}(\tau)\mathrm{sign}(\delta_a(\tau)).
\ee
The time-varying term in (\ref{dotla}) is given by
\be\label{rhoa0}
	\rho_{a0}(\tau)=r_{00}+r_{a0}(\tau)
\ee
where $r_{00}$ is a positive design parameter,
\be\label{dra0}
	\dot{r}_{a0}(\tau)=\begin{cases} \gamma_{a0} \,|\delta_{a0}(\tau)| & \text{if } |\delta_{a0}(\tau)|>\delta_{00}\\
0&\text{otherwise}\end{cases}
\ee
where $\delta_{a0}$ is defined in (\ref{deltaa0}), $\gamma_{a0}>0$ and $\delta_{00}>0$ are design parameters.
For $j=1,2$, define
\be
	\delta_{aj}(\tau)=k_j(\tau)-\frac{1}{\alpha_{aj}} \, |\hat{u}_j|-\epsilon_{aj}
\ee
where $\alpha_{aj}$ is chosen such that $0<\alpha_{aj}<1$ and $\epsilon_{aj}>0$ is a small positive scalar chosen to satisfy
\be\label{uj}
	\frac{1}{\alpha_{aj}} \, |\hat{u}_j|+\epsilon_{aj}/2> |u_j|.
\ee
The proposed adaptive elements $k_j(\tau)$ are given by
\be\label{kj}
	\dot{k}_j(\tau)=-\rho_{aj}(\tau)\mathrm{sign}(\delta_{aj}(\tau))
\ee
for $j=1,2$.
The time-varying terms in (\ref{kj}) are given by
\be\label{rhoaj}
	\rho_{aj}(\tau)=r_{0j}+r_{aj}(\tau), \text{ for } j=1,2
\ee
where
\be\label{draj}
	\dot{r}_{aj}(\tau)=\begin{cases} \gamma_{aj} \,|\delta_{aj}(\tau)| & \text{if } |\delta_{aj}(\tau)|>\delta_{0j}\\
0&\text{otherwise}\end{cases}
\ee
where $\gamma_{aj}>0$ and $\delta_{0j}$ is a small positive parameter.

\begin{thm}\label{thm_ao}
Under Assumptions \ref{As_Delta} and \ref{As_input}, there exists a positive number $\tau^\dagger$ such that the state and input estimates of the high gain observer presented in (\ref{AO1a}) - (\ref{AO3b}) satisfy
\be
	x(\tau)-\hat{x}(\tau)= 0
\ee
and
\be
	u(\tau)\rightarrow \hat{u}(\tau)
\ee
for $\tau\geq \tau^\dagger$.
\end{thm}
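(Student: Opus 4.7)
The plan is to follow the dual layer adaptive sliding mode observer analysis of \cite{EDWARDS_Automatica2016,Edwards_IJC2016}, exploiting the cascade structure of the error dynamics induced by (\ref{x1d})--(\ref{x4d}) and the observer (\ref{AO1a})--(\ref{AO3b}). Setting $e=x-\hat{x}$, I would first write out
\begin{eqnarray*}
\dot{e}_1 &=& e_2 -\alpha_a(\tau)|e_1|^{1/2}\mathrm{sign}(e_1)+\phi(e_1,L_a),\\
\dot{e}_2 &=& \Delta_\Phi(\tau)-\beta_a(\tau)\mathrm{sign}(e_1),\\
\dot{e}_3 &=& e_2+u_1-(k_1(\tau)+\eta_1)\mathrm{sign}(e_3),\\
\dot{e}_4 &=& -e_2+u_2-(k_2(\tau)+\eta_2)\mathrm{sign}(e_4),
\end{eqnarray*}
where I have used that $\Phi_{Sj}^{-1}(y_j)=x_{j+1}$ for $j=2,3$. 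The proof then proceeds in three stages, corresponding to the three pairs of variables $(e_1,e_2)$, $e_3$, $e_4$, and finally the filter equations for $\hat{\Delta}_\Phi$, $\hat{u}_1$, $\hat{u}_2$.

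For the first stage, I would invoke the dual layer super-twisting result of \cite{EDWARDS_Automatica2016}: provided the gain $L_a(\tau)$ eventually exceeds the threshold dictated by $|\Delta_\Phi|$, the $(e_1,e_2)$ subsystem enters a second order sliding mode on $\{e_1=0,\ e_2=0\}$ in finite time. The key property that has to be verified here is that the adaptive law (\ref{La})--(\ref{dra0}) drives $L_a$ to a value at which $\delta_{a0}(\tau)$ remains nonnegative, using Assumption \ref{As_Delta} together with the inequality $\tfrac{1}{a\beta_0}|\hat{\Delta}_\Phi|+\epsilon_a/2>|\Delta_\Phi|$. The dual layer structure, where $r_{a0}$ increases monotonically whenever $|\delta_{a0}|>\delta_{00}$ and thereby drives $\rho_{a0}$ up, guarantees that this threshold is reached in finite time; call this time $\tau_1^\dagger$.

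For the second stage I would observe that for $\tau\geq\tau_1^\dagger$ the dynamics of $e_3$ and $e_4$ reduce to first order sliding equations
\begin{equation*}
\dot{e}_j= u_{j-2}-(k_{j-2}(\tau)+\eta_{j-2})\mathrm{sign}(e_j),\qquad j=3,4,
\end{equation*}
driven by the unknown inputs $u_1,u_2$, which are bounded by Assumption \ref{As_input}. The adaptive law (\ref{kj})--(\ref{draj}) plays exactly the same role as the dual layer mechanism above, now with $\delta_{aj}$ in (\ref{uj}) ensuring that $k_j+\eta_j$ eventually dominates $|u_j|$. Standard reaching arguments \cite{edwards1998book} then give finite times $\tau_2^\dagger,\tau_3^\dagger\geq\tau_1^\dagger$ at which $e_3$ and $e_4$ reach zero and remain there. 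Setting $\tau^\dagger=\max\{\tau_1^\dagger,\tau_2^\dagger,\tau_3^\dagger\}$ yields $x(\tau)-\hat{x}(\tau)=0$ for $\tau\geq\tau^\dagger$.

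Finally, once the sliding motions are established, the equivalent injections $\beta_a(\tau)\mathrm{sign}(e_1)$ and $(k_j(\tau)+\eta_j)\mathrm{sign}(e_j)$ reproduce $-\Delta_\Phi$, $-u_1$, $-u_2$ on average, and the first order low pass filters (\ref{AO1c}), (\ref{AO2b}), (\ref{AO3b}) extract the slow components, giving $\hat{u}(\tau)\to u(\tau)$ as in the concluding argument of Theorem \ref{thm_so}. The main obstacle I anticipate is the first stage: verifying rigorously that the \emph{coupled} recursion between the adapted gain $L_a$ and the filtered estimate $\hat{\Delta}_\Phi$ (which itself enters the threshold $\delta_{a0}$) does not lead to a pathological equilibrium in which $L_a$ stalls below the required dominance level. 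This is precisely the delicate point addressed by the dual layer construction of \cite{EDWARDS_Automatica2016,Edwards_IJC2016}, and the bulk of the proof would consist of importing their Lyapunov argument and checking that the assumptions on $a$, $\epsilon_a$, $r_{00}$, and $\gamma_{a0}$ fixed above are consistent with Assumption \ref{As_Delta}.
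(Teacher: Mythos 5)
Your proposal follows essentially the same cascade argument as the paper's proof: finite-time convergence of the $(e_1,e_2)$ super-twisting subsystem via the dual-layer adaptive results of \cite{EDWARDS_Automatica2016,Edwards_IJC2016}, then first-order sliding for $e_3$ and $e_4$ with adaptively dominated gains, then low-pass filtering of the equivalent injections to recover $\Delta_\Phi$ and $u$. If anything, your error equations for $e_3,e_4$ are more faithful to the observer (\ref{AO2a})--(\ref{AO3b}) than the paper's displayed ones (\ref{de3a})--(\ref{de4a}), which omit the unknown-input terms $u_1,u_2$ that the adaptive gains $k_j+\eta_j$ exist to dominate.
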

\begin{proof}
The proof follows the results of the dual layer nested adaptive approaches in \cite{EDWARDS_Automatica2016,Edwards_IJC2016}. The error dynamics for the state estimation are
\bea
	\dot{e}_1&=&e_2 -\alpha_a(\tau)\, |e_1| ^{1/2}\mathrm{sign}(e_1)-\phi(e_1,L_a)\label{de1a}\\
	\dot{e}_2&=&\Delta_\Phi-\beta_a(\tau)\mathrm{sign}(e_1) \label{de2a}\\
	\dot{e}_3 &=& e_2-(k_1(\tau)+\eta_1)\mathrm{sign}(e_3)\label{de3a}\\
	\dot{e}_4 &=& -e_2-(k_2(\tau)+\eta_2)\mathrm{sign}(e_4) \label{de4a}.
\eea
According to \cite{Edwards_IJC2016}, there exists a number $\tau^\dagger_1>0$ such that $e_1(\tau)=0$ and $e_2(\tau)=0$ for $\tau\geq \tau^{\dagger}_1$. It is easy to show that $e_3$ and $e_4$ are bounded in the interval $[0,\tau^{\dagger}_1]$.

According to \cite{EDWARDS_Automatica2016}, there exists a number $\tau^{\dagger}_2\geq\tau^{\dagger}_1$ such that  $e_3(\tau)=0$ for $\tau\geq \tau^{\dagger}_2$ \cite{edwards1998book}. Using the same argument,  there exists a number $\tau^{\dagger}_3\geq \tau^{\dagger}_1$ such that  $e_4(\tau)=0$ for $\tau\geq \tau^{\dagger}_3$. Therefore, $e=0$ for $\tau\geq \tau^{\dagger} =\max \{\tau^{\dagger}_1,\tau^{\dagger}_2,\tau^{\dagger}_3\}$.

The recovery of $\Delta_\Phi$, $u_1$, and $u_2$ follows the standard filtering approach in sliding mode control \cite{edwards1998book} in (\ref{AO1c}), (\ref{AO2b}), (\ref{AO3b}), from which $\hat{\Delta}_{\Phi}\rightarrow \Delta_{\Phi}$ and $\hat{u}\rightarrow u$.
\end{proof}
\begin{remark}
Similar to the traditional sliding mode observer, the sign function of the adaptive sliding mode observer can be approximated using the expression in (\ref{deltas})
\be\label{deltaa}
	\mathrm{sign}(e) \approx \frac{e}{\delta_a +|e|},
\ee
which introduces another design parameter, that is $\delta_a$.
\end{remark}

\begin{remark}
The proposed adaptive sliding mode observer requires the tuning of 21 parameters: $\alpha_0$, $\beta_0$, $\eta_1$, $\eta_2$, $a$, $l_0$, $r_{00}$, $r_{01}$, $r_{02}$, $\tau_a$, $\epsilon_{a1}$, $\epsilon_{a2}$, $\alpha_{a1}$, $\alpha_{a2}$, $\gamma_{a0}$, $\gamma_{a1}$, $\gamma_{a2}$, $\delta_{00}$, $\delta_{01}$, $\delta_{02}$, $\delta_a$.
\end{remark}

\begin{remark}
The parameters of the adaptive sliding mode observer in general do not depend on the bounds of the unknown inputs and uncertainty.
\end{remark}

\section{NUMERICAL EXAMPLE}\label{Example}
For the purpose of estimation, we employ a backstepping controller for the output $L_{m1}$ to track a time-varying reference signal. A numerical example will be conducted using the proposed controller and observers to estimate the state variables and the activation signals.

\subsection{BACKSTEPPING CONTROLLER}\label{Control}
The specific controller is irrelevant for estimation analysis and design, as long as the estimates are not being fed back to the controller. This is the case even when the estimator does not have access to direct control input measurements, provided an accurate dynamic model is available.

In this paper, a tracking control scheme is constructed based on its counterpart for setpoint regulation \cite{RichterWarnerIFAC17}. A tracking extension for the dual muscle system, which includes activation dynamics, is reported in \cite{WarnerDSCC2017}. A control method based on an artificial field approach can be derived as in \cite{Woods2017_TSMCS}. Our goal is to design a stable feedback tracking controller for the position of the mass, in which $u_1$ and $u_2$ are control inputs. The activation signals $a_1$ and $a_2$ are subsequently calculated from the relation in (\ref{a}). Assume that the uncertainty $\Delta_\Phi$ is known to the controller.

As in \cite{RichterWarnerIFAC17}, the standard backstepping procedure is employed to synthesize a virtual control input based on tendon force difference to setpoint-stabilize the load subsystem formed by (\ref{load1}) and (\ref{load2}). The constructive scheme is based on a Lyapunov function $V$ that becomes negative-definite for the load subsystem under the synthetic control law.

In \cite{RichterWarnerIFAC17}, two alternative methods for the synthetic input are employed: a scalar approach and a vector approach. We aim to design our control method based on the former. Denote the reference signal as $r(t)$ and assume that it is twice differentiable.

Denote the tracking error and its derivative as
\be
	e=\left[\begin{matrix}
	e_1\\e_2
	\end{matrix}\right]=\left[\bmx x_1-r\\\dot{x}_1-\dot{r}\emx\right].
\ee
Furthermore, define
\be
	\zeta=\Phi_{S2}(L_{S2})-\Phi_{S1}(L_{S1})+m\Delta_\Phi-m\dot{r}.
\ee
Our goal is to design $u_1$ and $u_2$ such that $e$ converges to 0.
The error dynamics is described in the form
\be\label{edynamics}
	\dot{e}=Ae+B\zeta
\ee
where
\be
	A=\left[\begin{matrix}
	0&1\\0&0
	\end{matrix}\right], \quad B=\left[\begin{matrix}
	0\\\frac{1}{m}
	\end{matrix}\right].
\ee
Consider the Lyapunov function
\be
	V=\frac{1}{2}e^TPe
\ee
where $P$ is a positive definite matrix. The system (\ref{edynamics}) is stable if a state feedback regulator is chosen as $\zeta=\Psi(e)=-Ke$ such that $A_{cl}=A-BK$ is Hurwitz. Hence,
\be\label{dotV}
	\dot{V}=\frac{1}{2}e^T(A_{cl}^TP+PA_{cl})e=-\frac{1}{2}e^TQe
\ee
where $Q$ is positive definite. Thus, $\dot{V}<0$. This implies that the error converges to 0. However, $\zeta$ is not a direct control input. As a result, we introduce a variable
\be
	w=\zeta-\Psi(e).
\ee
Its derivative is given as
\be
	\dot{w}=\dot{\zeta}-\dot{\Psi}(e)=\Phi^\prime_{S2}\dot{L}_{S2}-\Phi^\prime_{S1}\dot{L}_{S1}+m\dot{\Delta}_\Phi-m\ddot{r}
\ee
where
\be
	\Phi^\prime_{Si}=\frac{d\Phi_{Si}}{dL_{Si}}
\ee
for $i=1,2$.
The error dynamics is rewritten as
\be
	\dot{e}=A_{cl}e+Bw
\ee
Augment the Lyapunov function $V$ with a quadratic term in $w$
\be
	V_a=V+\frac{1}{2}w^2.
\ee
Taking its derivative yields
\be
	\dot{V}_a=-\frac{1}{2}e^TQe+w\kappa
\ee
where
\be\label{kappa}
	\kappa=\Phi^\prime_{S2}(x_2+u_2)-\Phi^\prime_{S1}(-x_2+u_1)+m\dot{\Delta}_\Phi-m\ddot{r}+B^TPe.
\ee
Here, $\kappa$ is chosen such that $\kappa=-\gamma w$ with $\gamma >0$ to make $\dot{V}_a$ negative definite. Hence, the augmented system of $e$ and $w$ is asymptotically stable. It should be noted that we cannot deduce unique solutions of $u_1$ and $u_2$ from $\kappa$ in (\ref{kappa}).

From (\ref{kappa}),
\be\label{beta}
	\Phi^\prime_{S2}u_2-\Phi^\prime_{S1}u_1=\beta
\ee
where
\be
\nn \beta=-K_1\zeta-K_2 e-(\Phi^\prime_{S2}+\Phi^\prime_{S1})x_2+m\ddot{r}
\ee
with $K_1=\gamma+KB$, and $K_2=(KA+\gamma K+B^TP)e$.
The control redundancy can be resolved using the least square solution to (\ref{beta}), which solves the minimization of $u_1^2+u_2^2$. This minimization should indirectly reduce muscle activation inputs as virtual controls are muscle contraction velocities. Similar to \cite{RichterWarnerIFAC17}, the least square virtual control inputs are given as
\bea
	u_1&=&-\frac{\Phi^\prime_{S1}}{\Delta}\beta\\
    u_2&=&\frac{\Phi^\prime_{S2}}{\Delta}\beta
\eea
where $\Delta=(\Phi^\prime_{S1})^2+(\Phi^\prime_{S2})^2$.

\begin{remark}
Since nonlinear functions $\Phi_{Sj}$, $\Phi_{Pj}$, $f_j$, and $g_{j}^{-1}$ are defined on finite intervals and there are singularities, constrained techniques must be used to prevent a finite escape time.
\end{remark}

\begin{figure}[!t]
\centering
\includegraphics[width = \columnwidth]{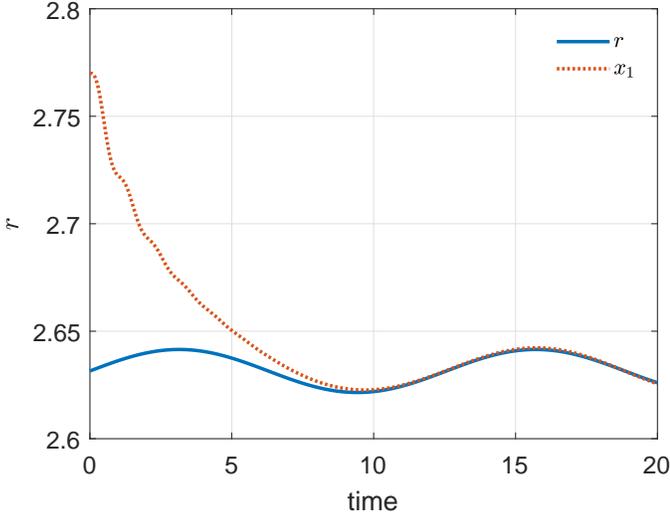}
\caption{The reference signal $r$ and the output $y_1=x_1$. All quantities are dimensionless (no units).} \label{r}
\end{figure}

\begin{figure}[!t]
\includegraphics[width = .7\columnwidth]{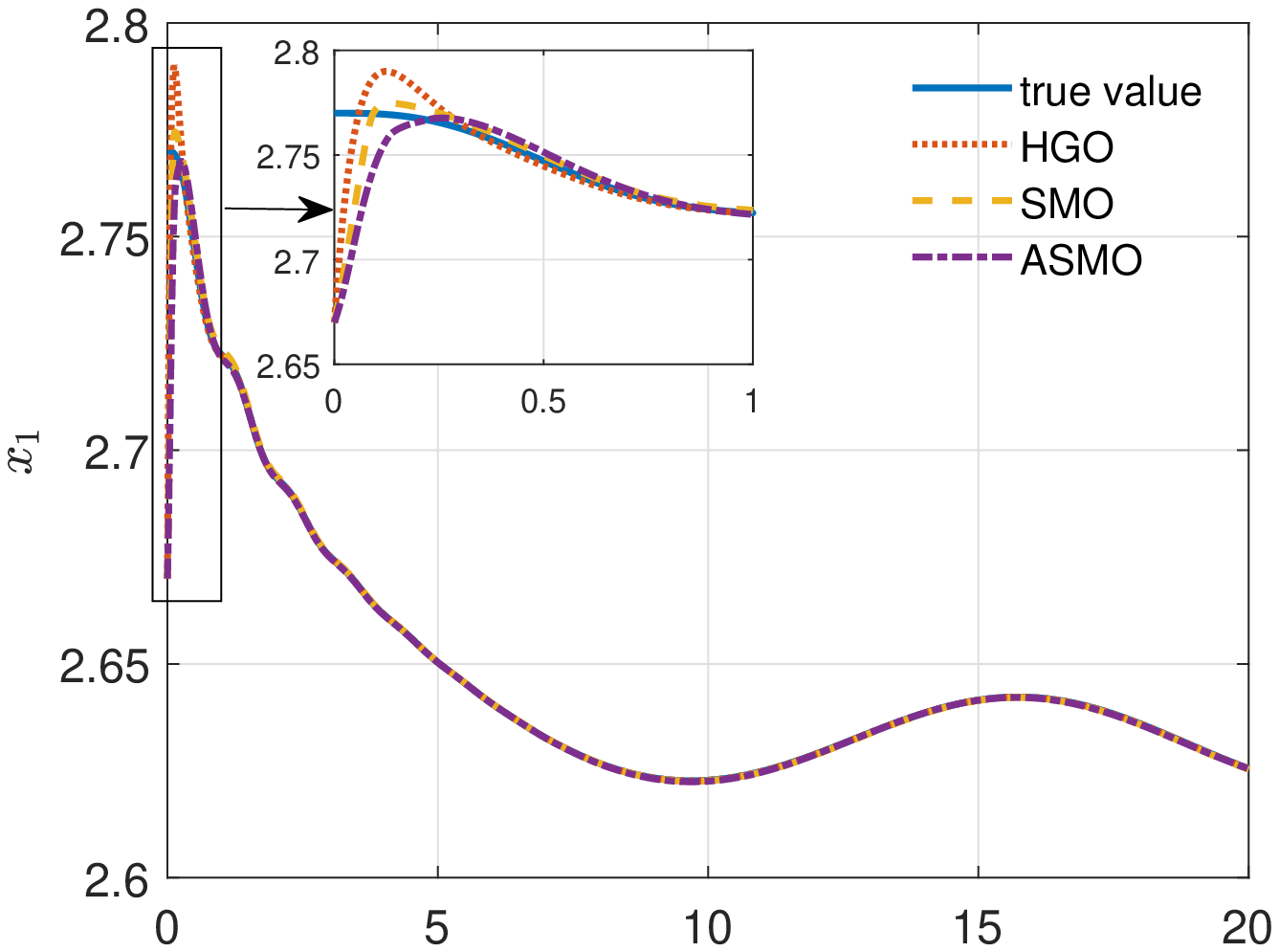}
\includegraphics[width = .7\columnwidth]{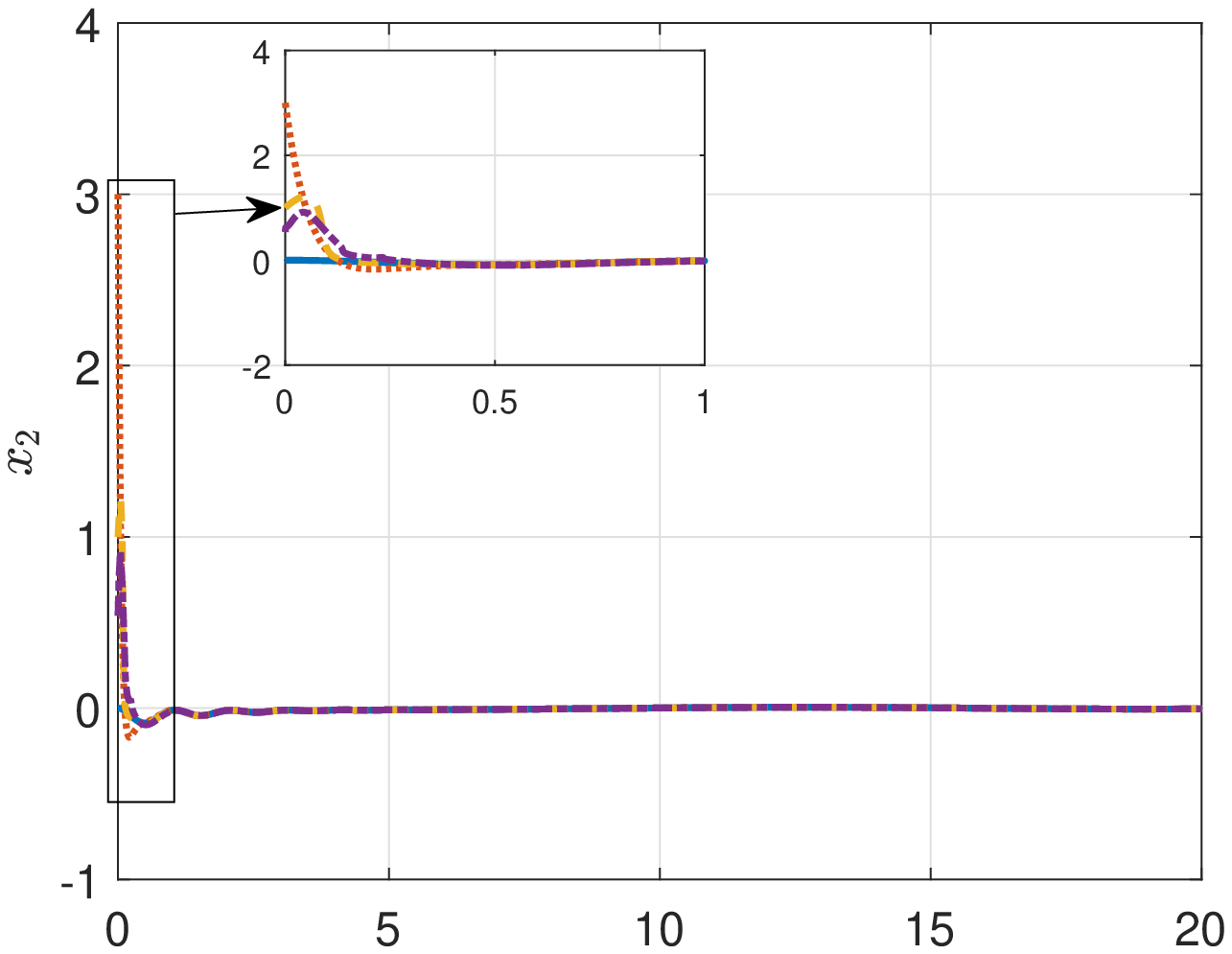}
\includegraphics[width = .7\columnwidth]{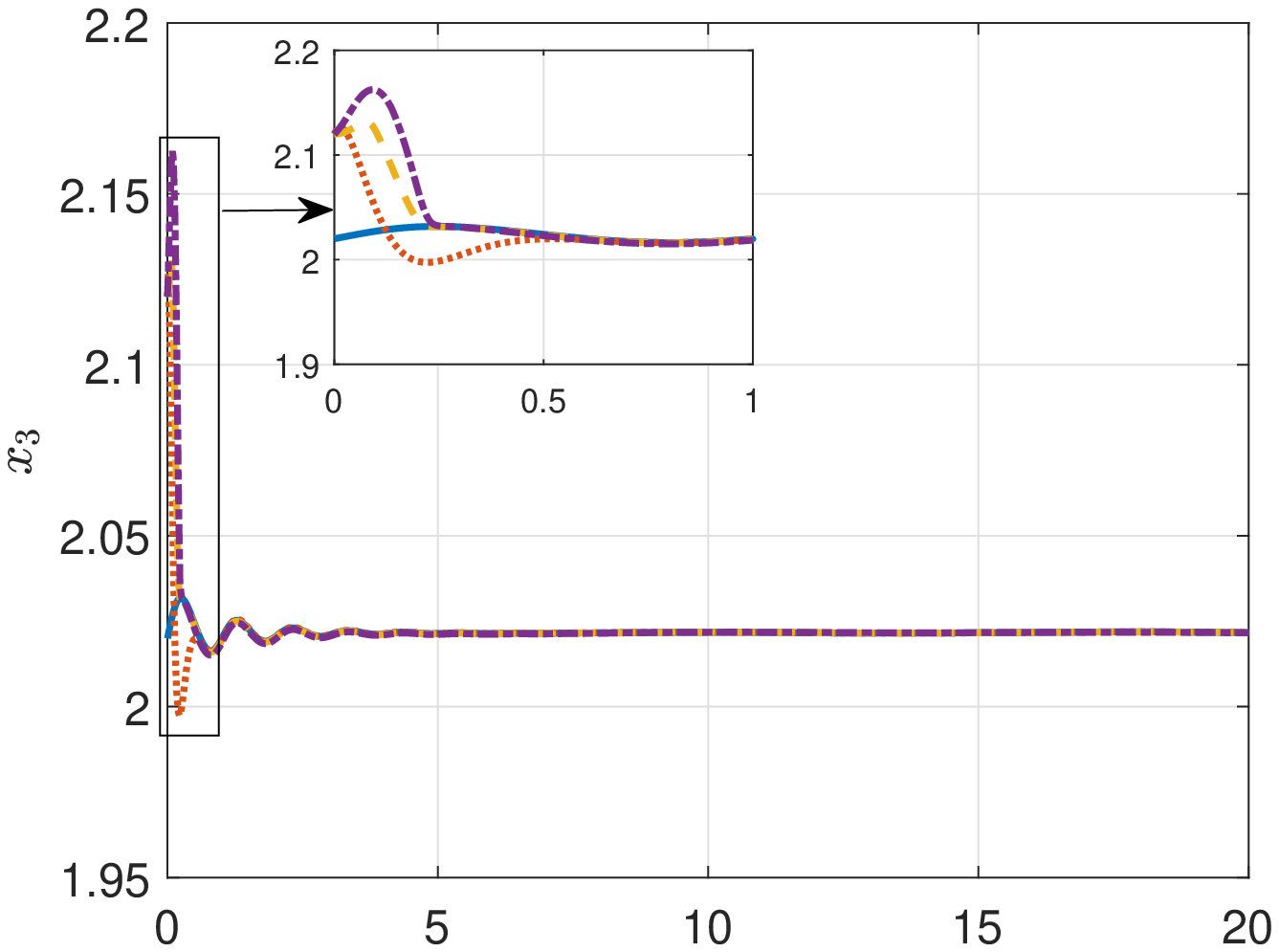}
\includegraphics[width = .7\columnwidth]{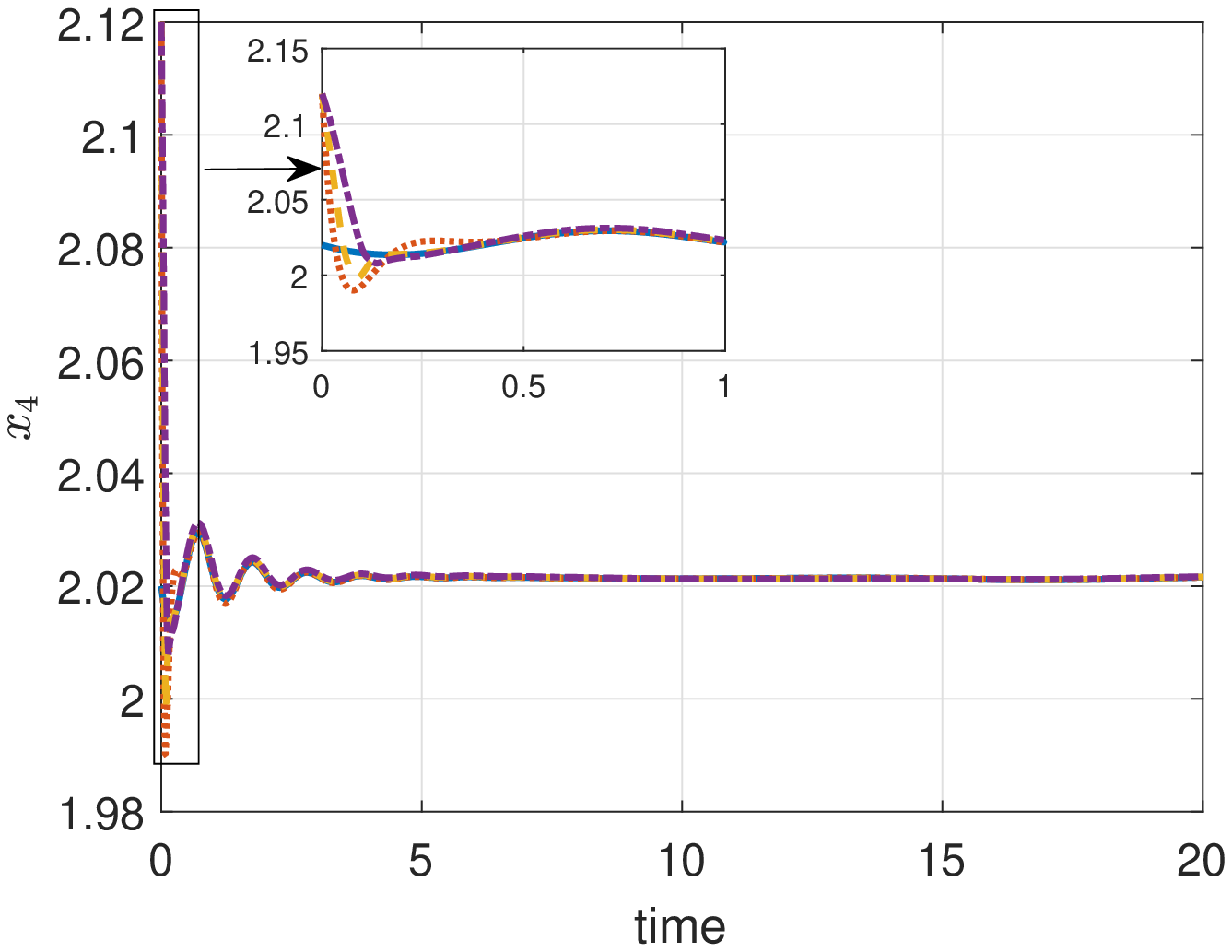}
\caption{The true value and estimates of $x$ for the noise free case using 3 observers: high gain observer (HGO), sliding mode observer (SMO), adaptive sliding mode observer (ASMO). All quantities are dimensionless (no units).} \label{x}
\end{figure}

\begin{figure}[!t]
\includegraphics[width = .7\columnwidth]{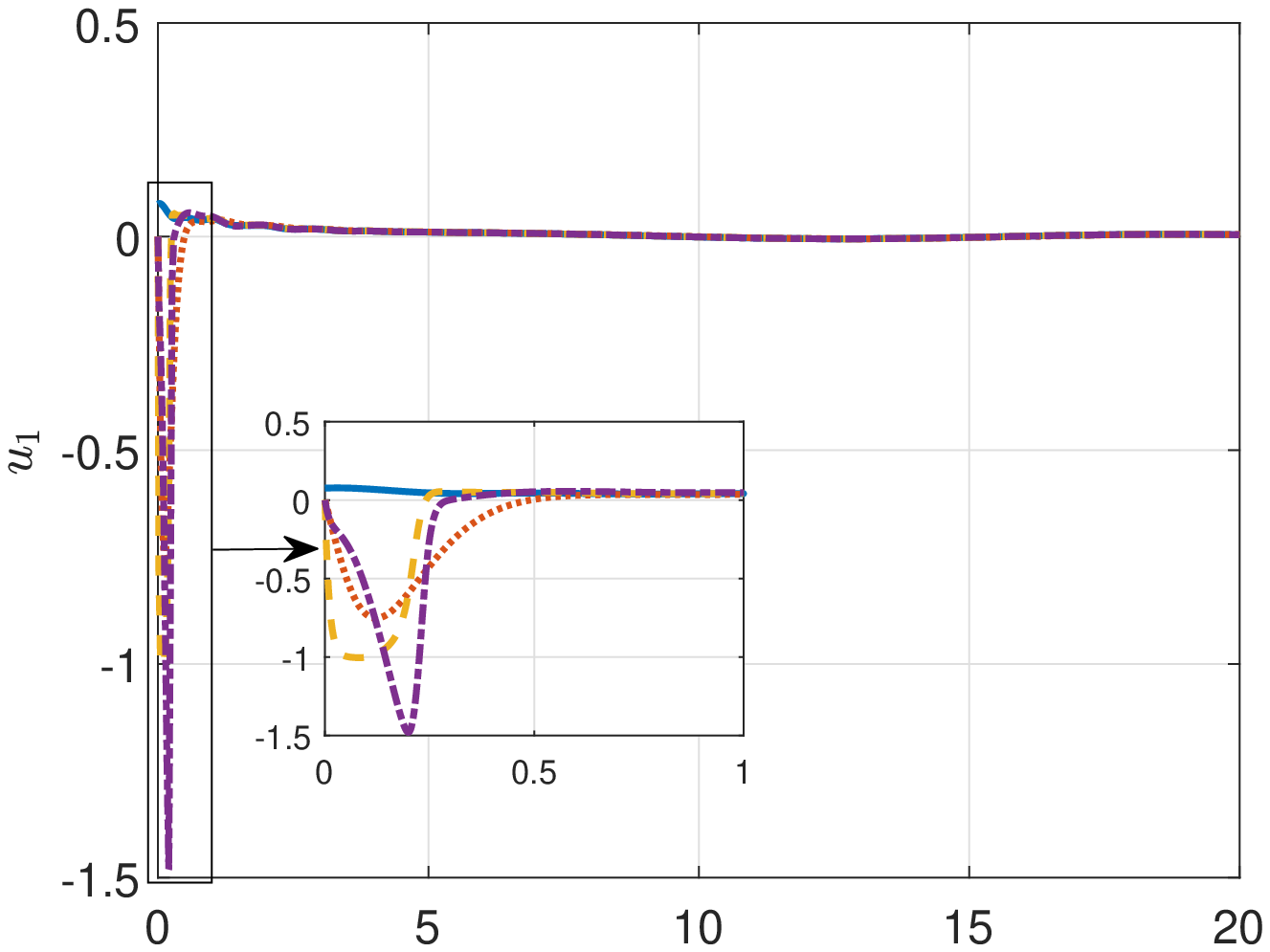}
\includegraphics[width = .7\columnwidth]{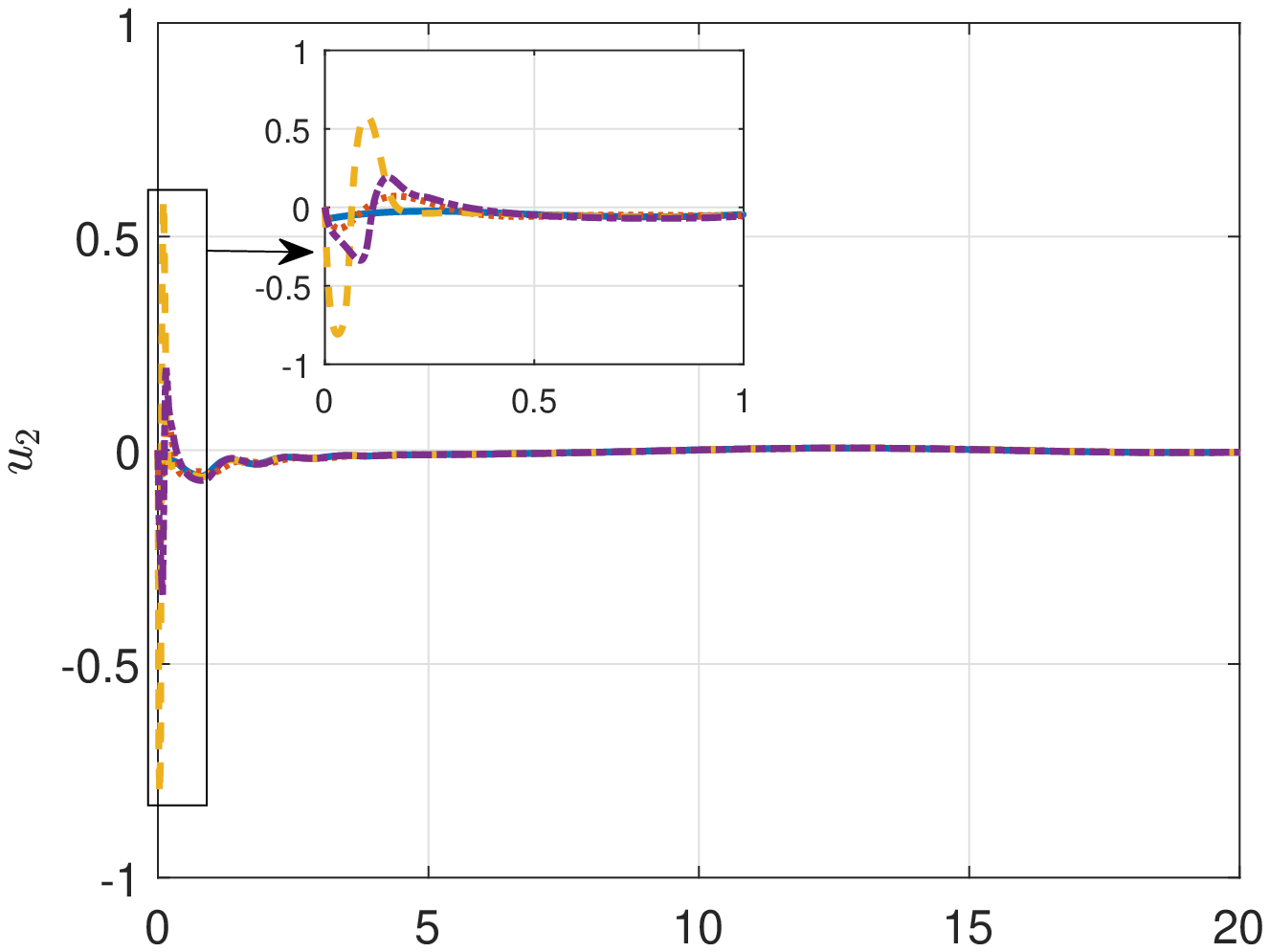}
\includegraphics[width = .7\columnwidth]{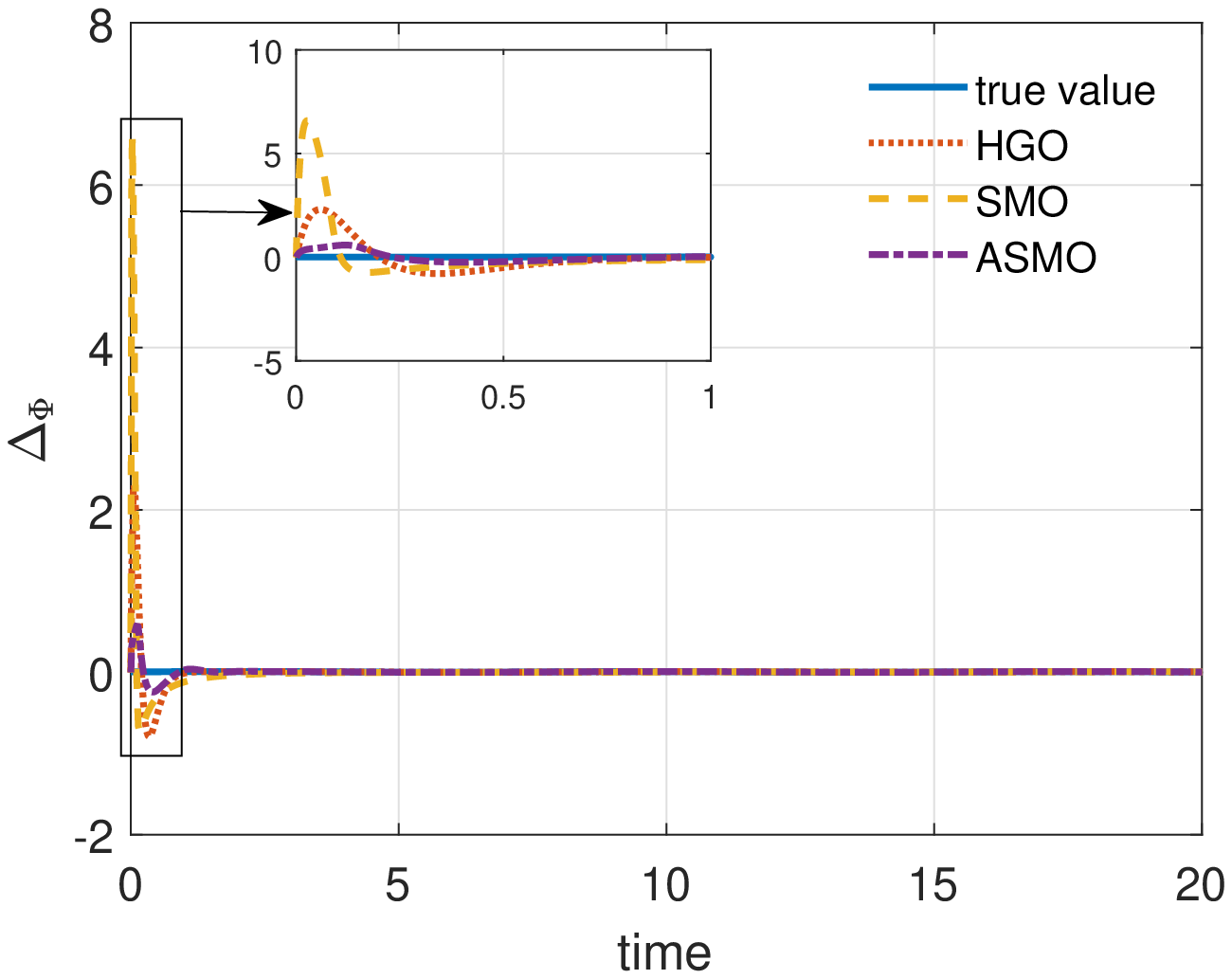}
\caption{The true value and estimates of $u$ and $\Delta_\Phi$ for the noise free case using 3 observers: high gain observer (HGO), sliding mode observer (SMO), adaptive sliding mode observer (ASMO). All quantities are dimensionless (no units).} \label{u}
\end{figure}

\begin{figure}[!t]
\includegraphics[width = .7\columnwidth]{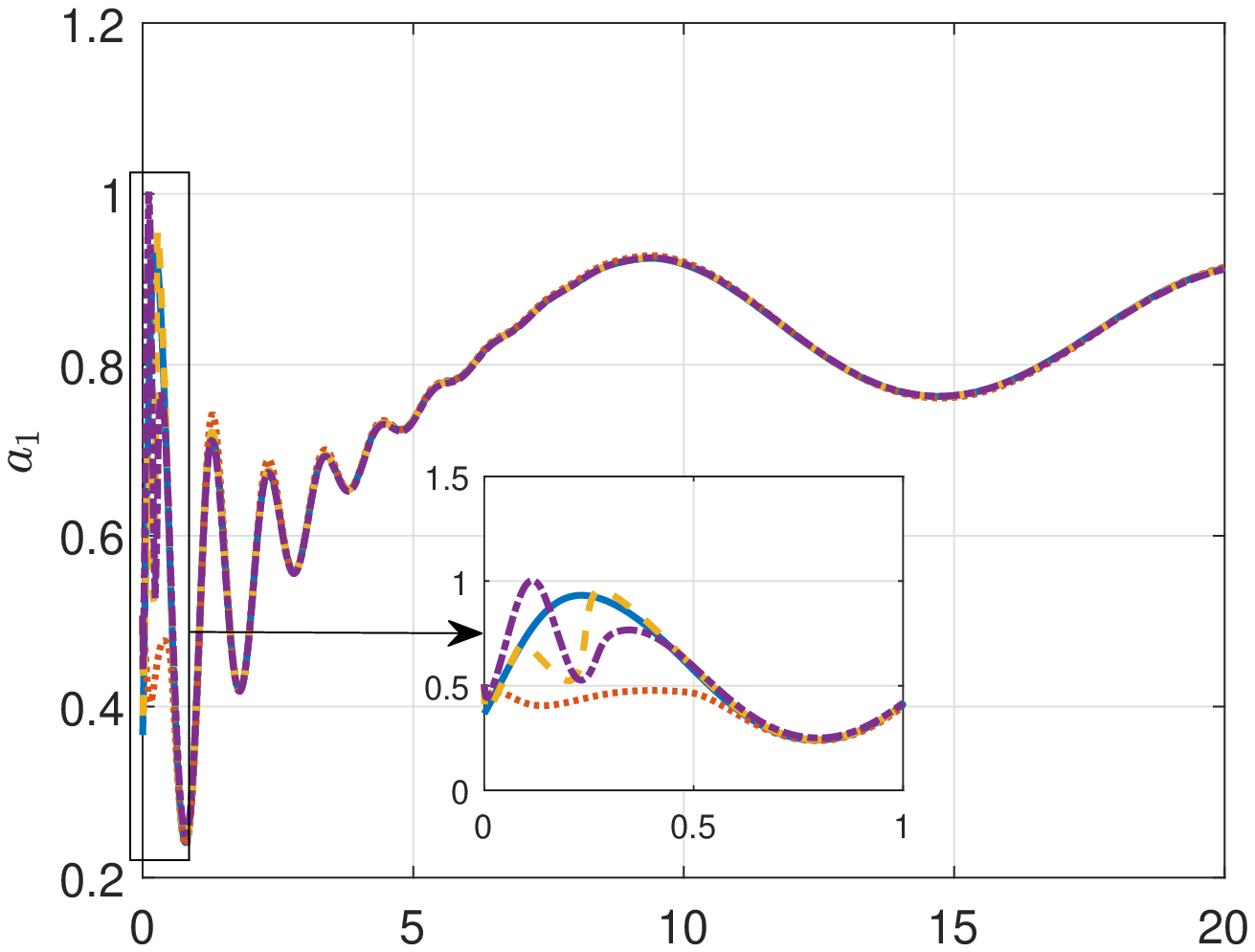}
\includegraphics[width = .7\columnwidth]{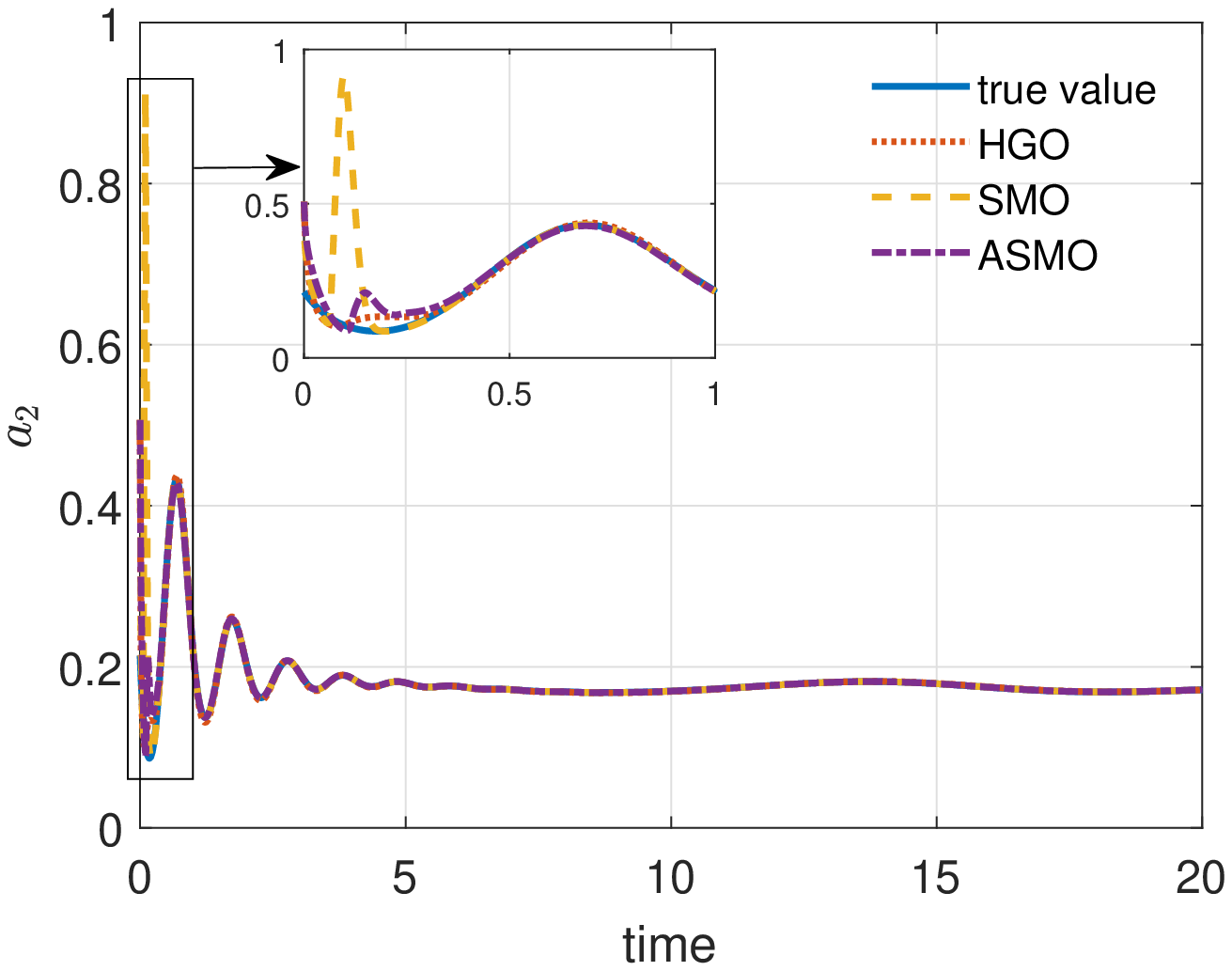}
\caption{The true value and estimates of $a$ for the noisy case using 3 observers: high gain observer (HGO), sliding mode observer (SMO), adaptive sliding mode observer (ASMO). All quantities are dimensionless (no units).} \label{a}
\end{figure}

\begin{figure}[!t]
\includegraphics[width = .7\columnwidth]{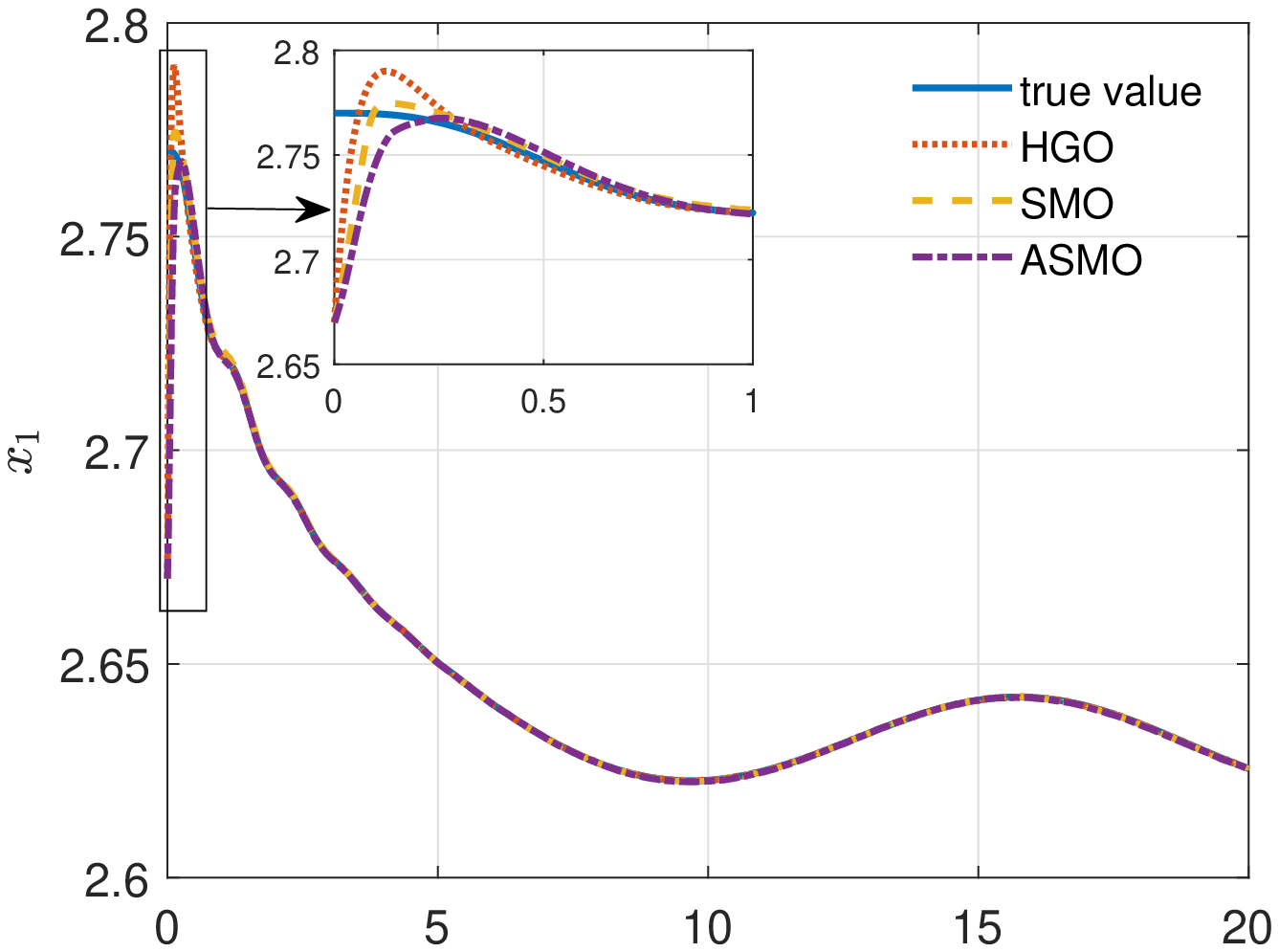}
\includegraphics[width = .7\columnwidth]{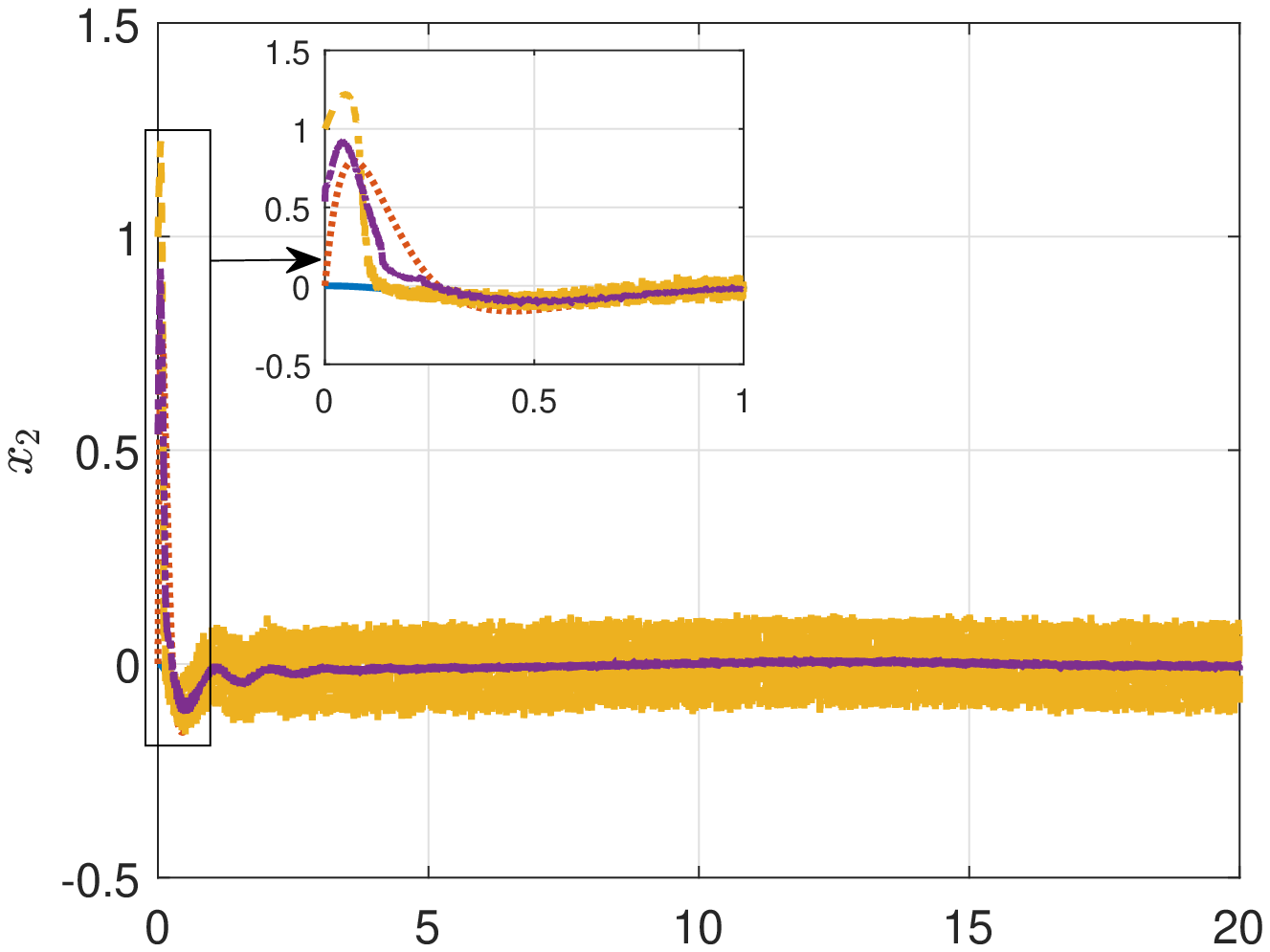}
\includegraphics[width = .7\columnwidth]{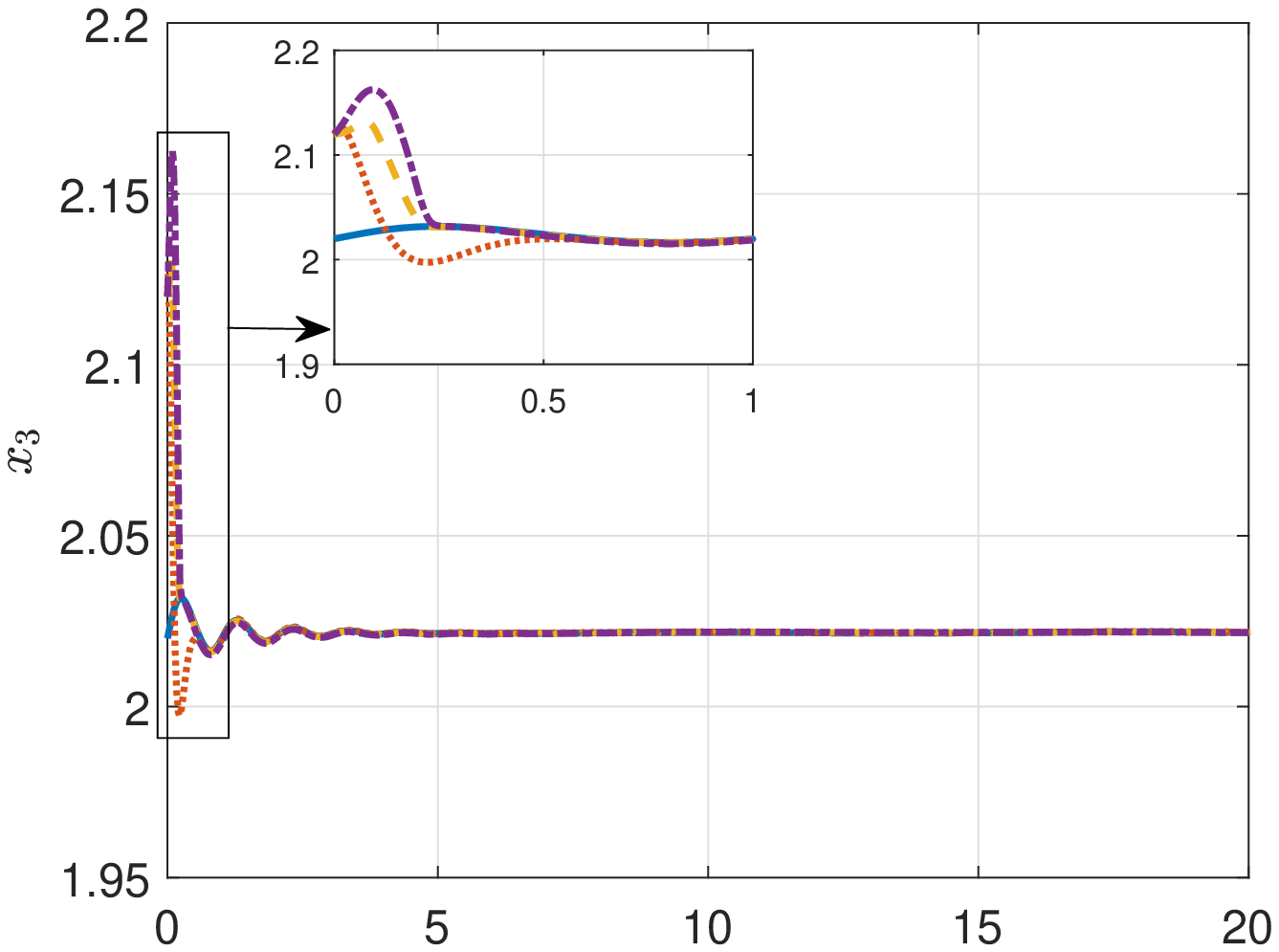}
\includegraphics[width = .7\columnwidth]{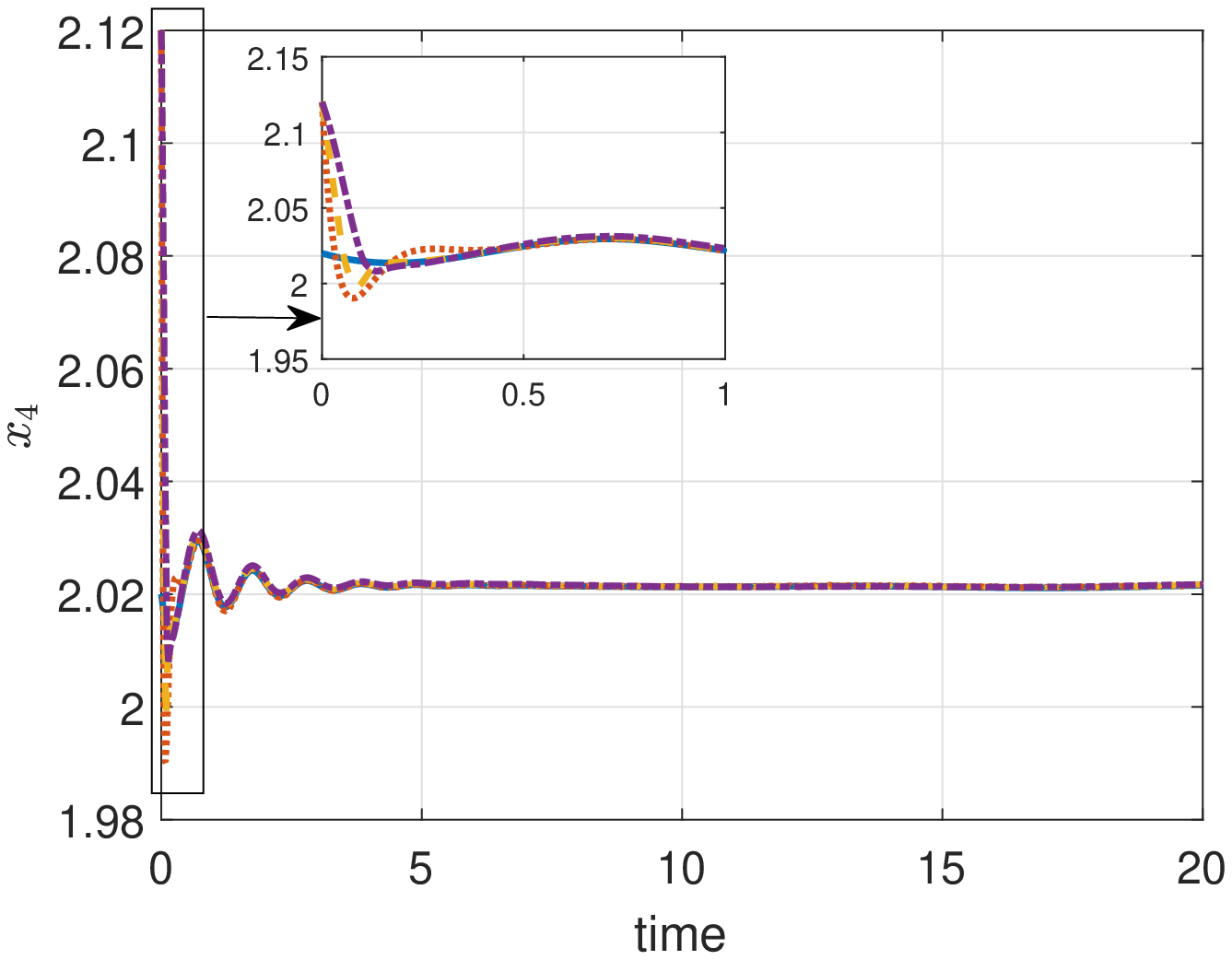}
\caption{The true value and estimates of $x$ for the noise free case using 3 observers: high gain observer (HGO), sliding mode observer (SMO), adaptive sliding mode observer (ASMO). All quantities are dimensionless (no units).} \label{xn}
\end{figure}

\begin{figure}[!t]
\includegraphics[width = .7\columnwidth]{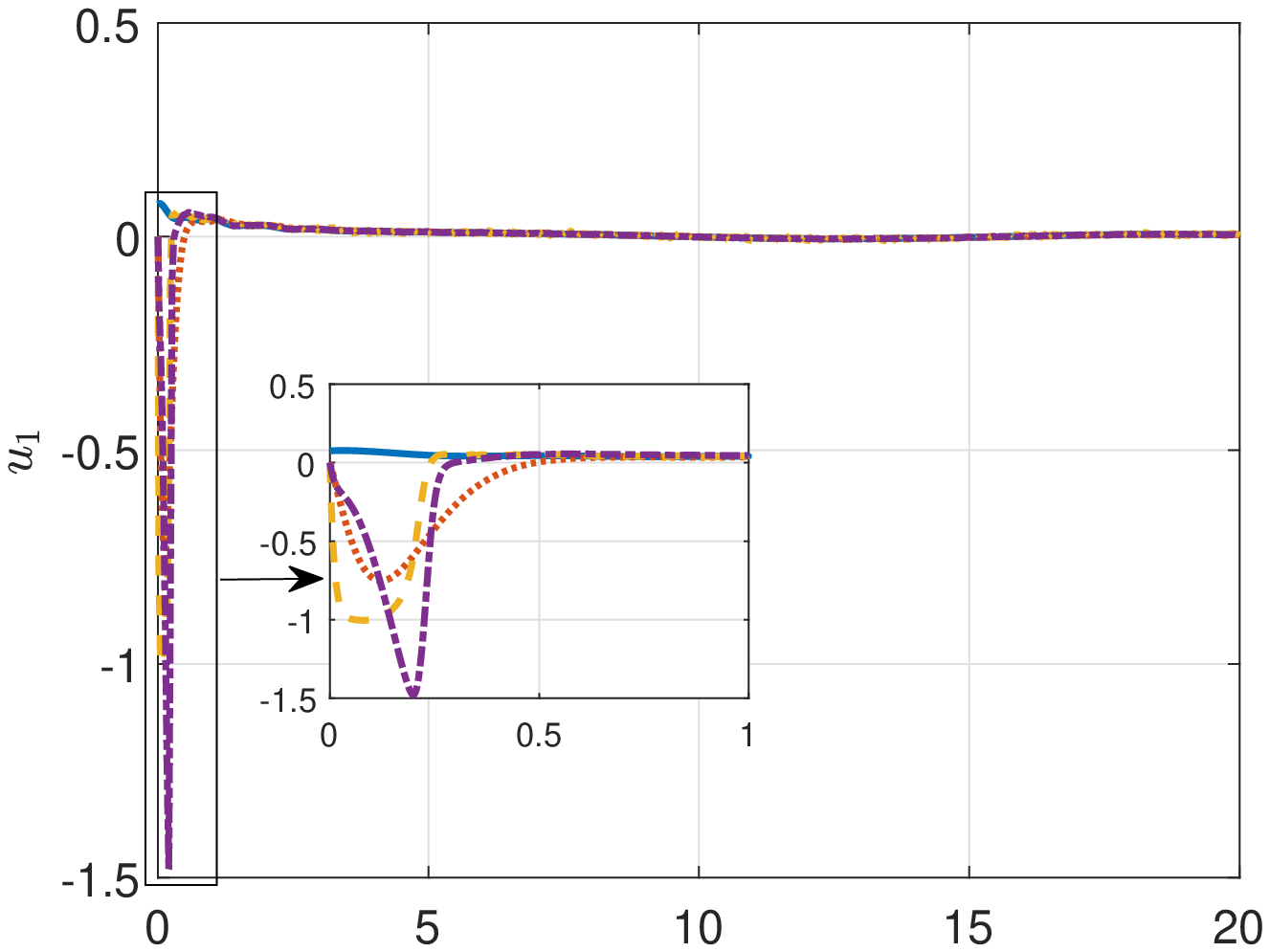}
\includegraphics[width = .7\columnwidth]{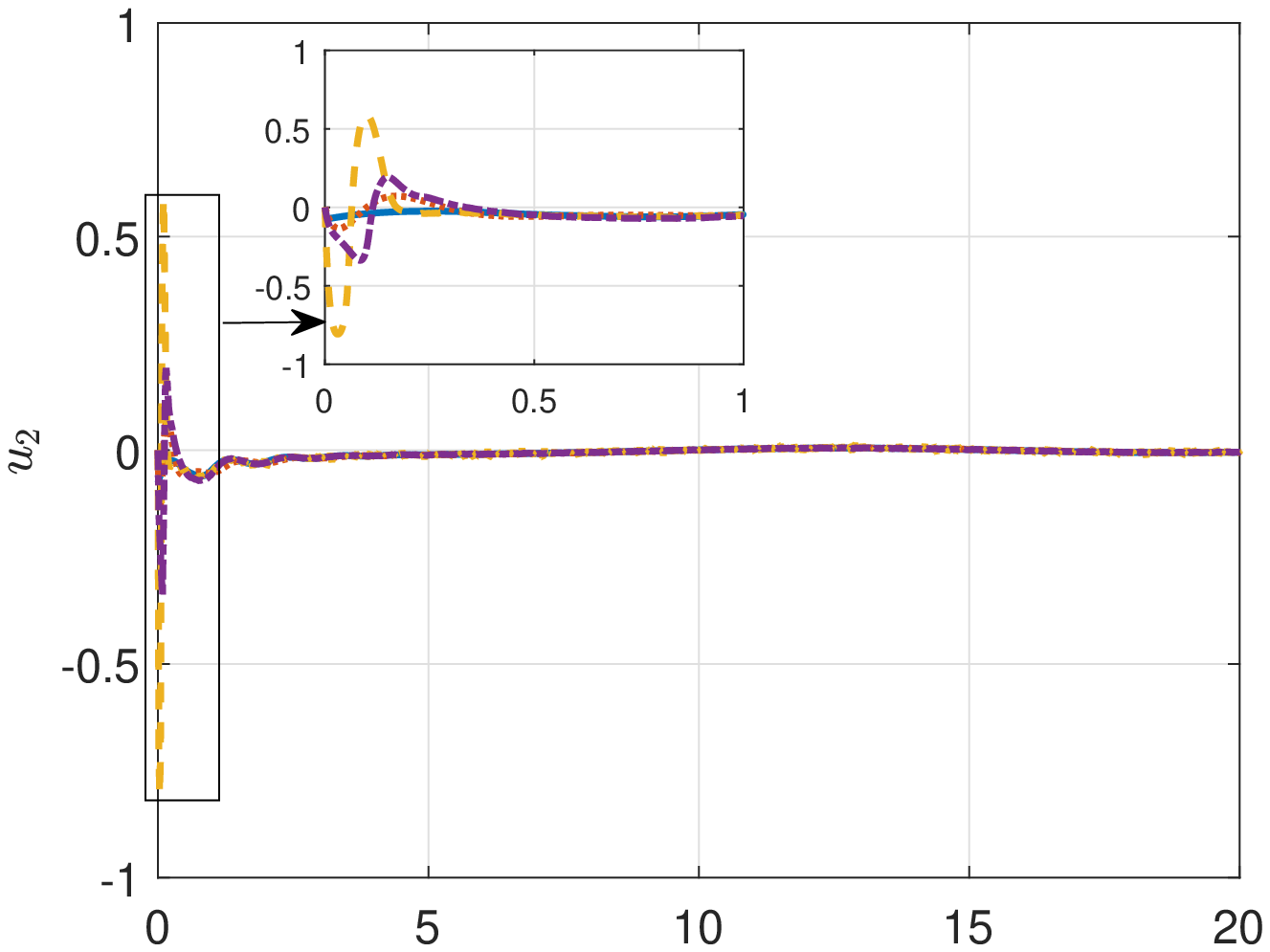}
\includegraphics[width = .7\columnwidth]{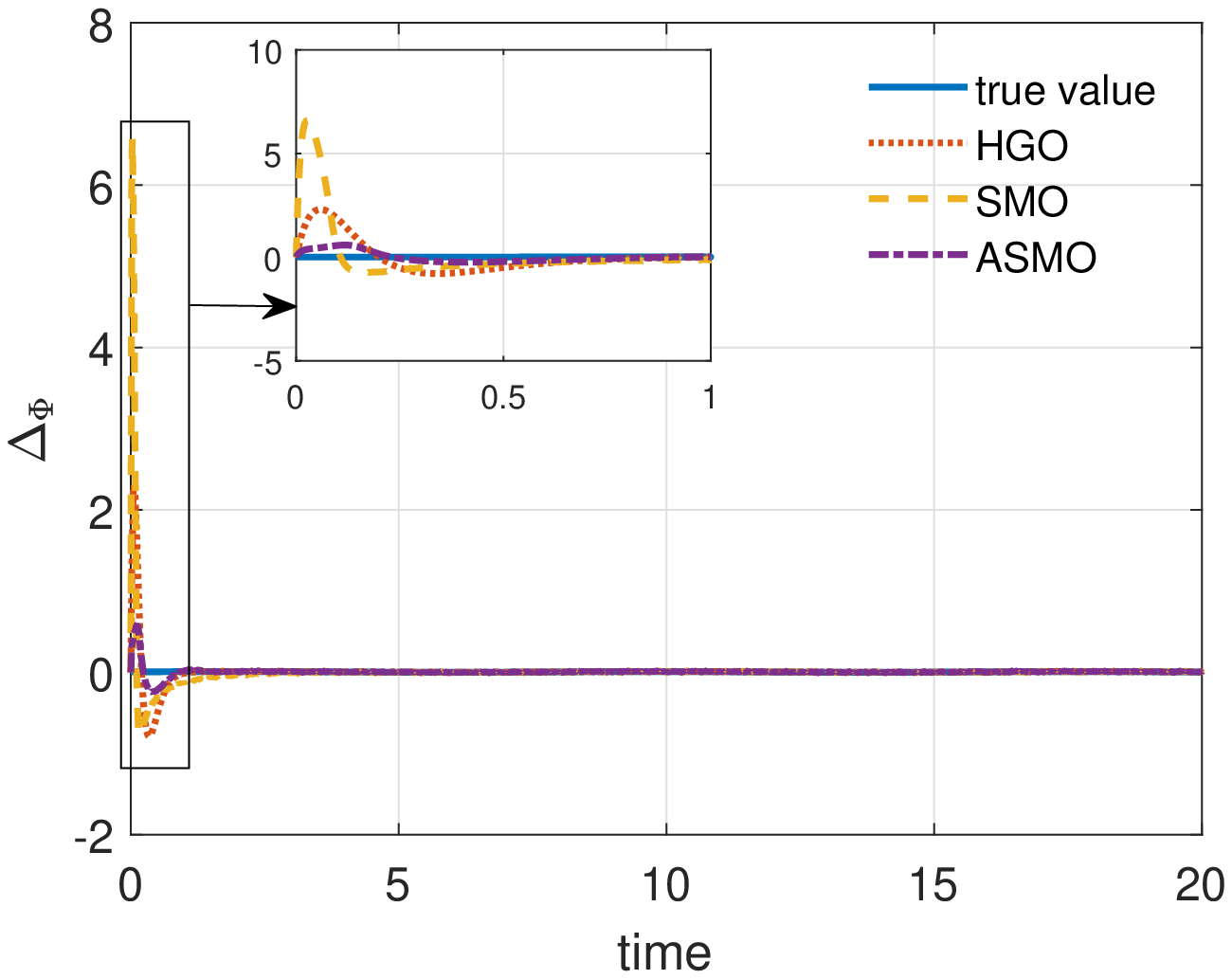}
\caption{The true value and estimates of $u$ and $\Delta_\Phi$ for the noisy case using 3 observers: high gain observer (HGO), sliding mode observer (SMO), adaptive sliding mode observer (ASMO). All quantities are dimensionless (no units).} \label{un}
\end{figure}

\begin{figure}[thp]
\includegraphics[width = .7\columnwidth]{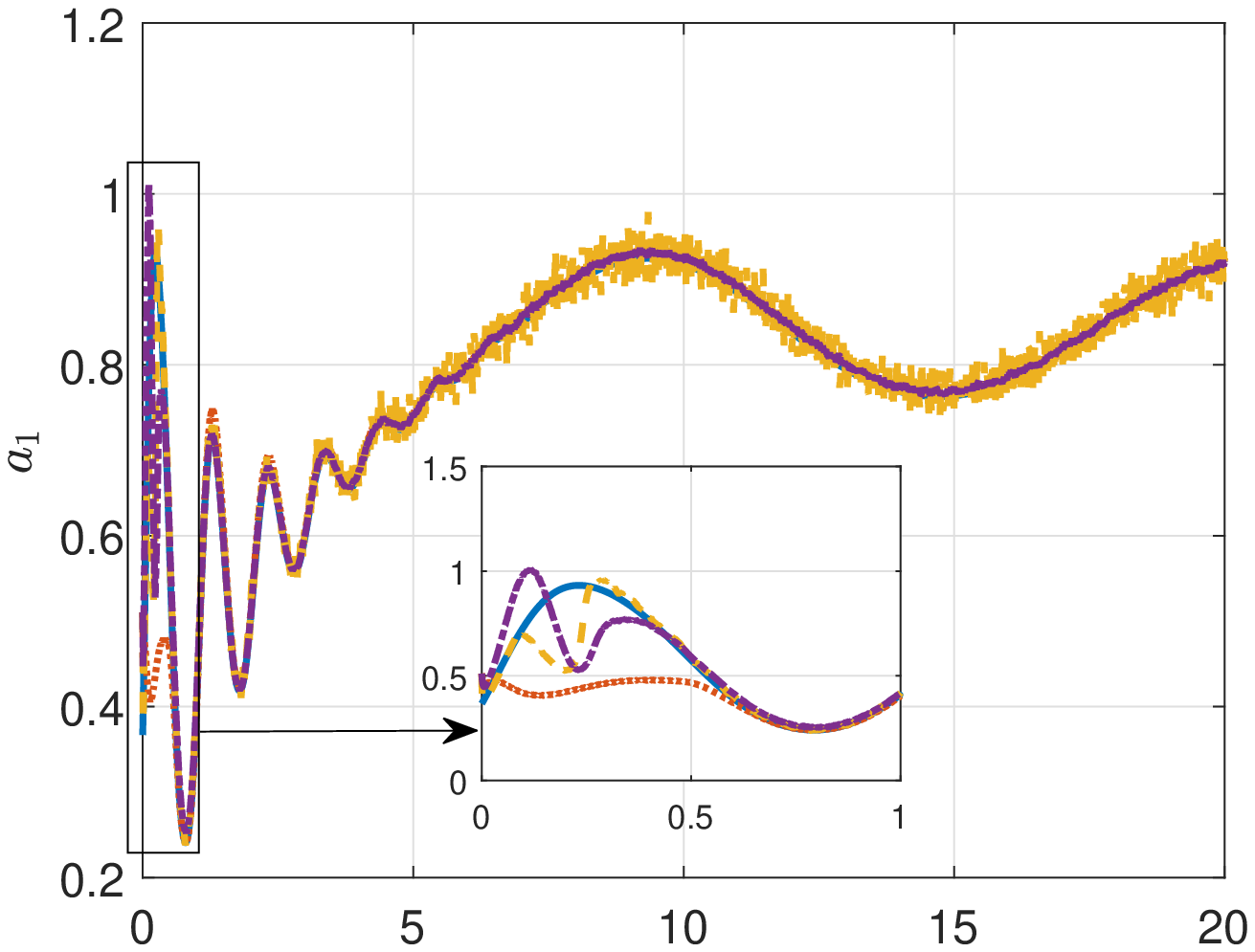}
\includegraphics[width = .7\columnwidth]{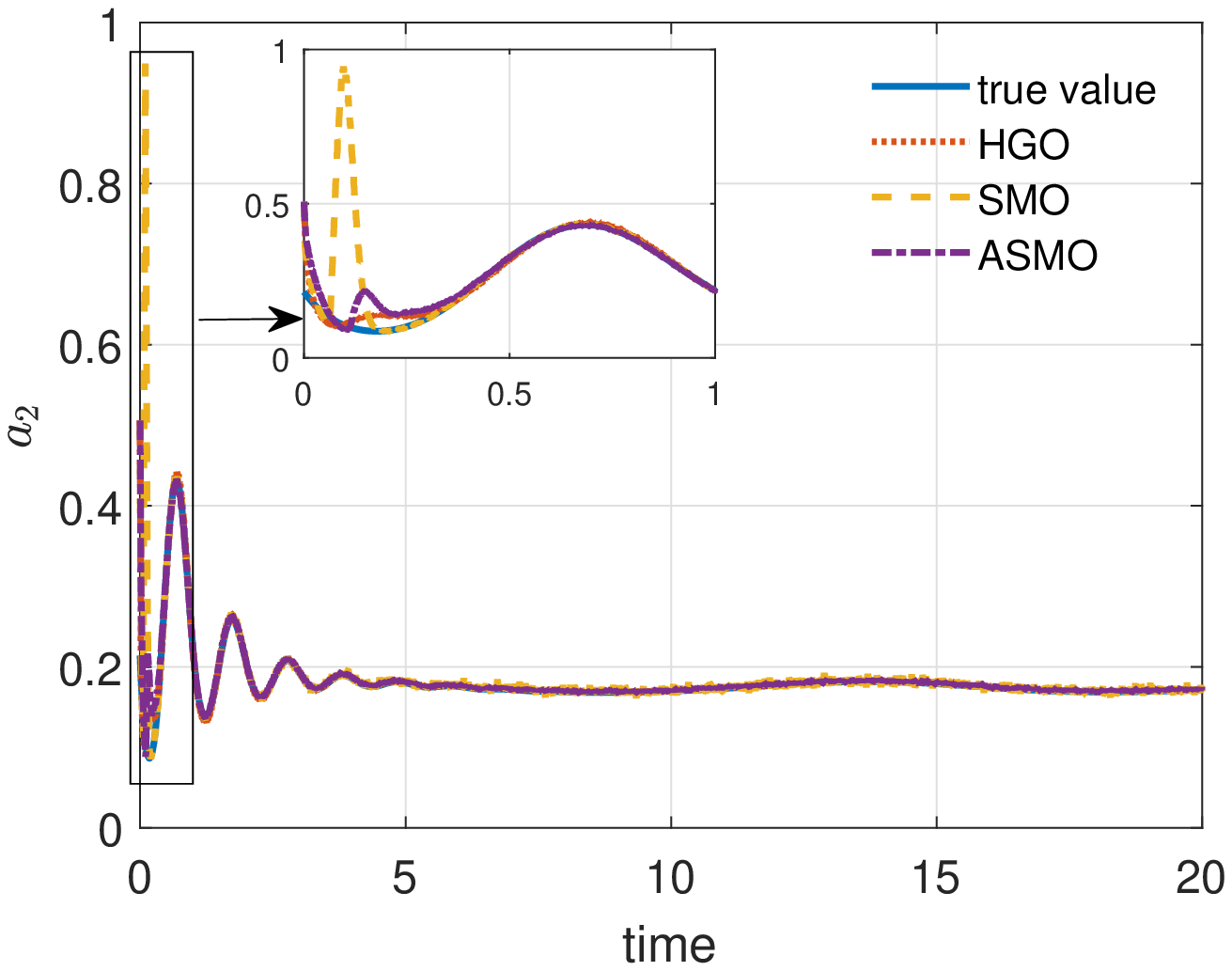}
\caption{The true value and estimates of $a$ for the noisy case using 3 observers: high gain observer (HGO), sliding mode observer (SMO), adaptive sliding mode observer (ASMO). All quantities are dimensionless (no units).} \label{an}
\end{figure}
\subsection{SIMULATION}\label{Simulation}

To illustrate the proposed scheme, we conducted two numerical simulations for a dual muscle system: noise free and noisy cases. The total length of the dual muscle system is $C=Lm_1+Lm_2=5.54$. The mass of the system is $m=1$.  The reference trajectory is chosen as
\be
\nn r=2.6315+0.01\,\sin 0.5 \tau.
\ee
Functions $\Phi_{Sj}$, $\Phi_{Pj}$  are chosen as in (\ref{PhiSj}) and (\ref{PhiPj}), \cite{WarnerRichterTR2016}. The parameter of (\ref{fj}) is $W=0.3$. The parameters of (\ref{gj}) are chosen as: $A=0.25$, $g_{max}=1.5$. Due to (\ref{PhiSj}), the upper bound of $\Phi_{S2}(x_4)-\Phi_{S1}(x_3)$ is 1.

The uncertainty of the system is
\be\label{DeltaPhi}
	\Delta_\Phi(\tau)=0.005+0.005\,\sin 0.8\tau.
\ee

The controller parameters in Subsection \ref{Control} are: $K=\left[\bmx 0.5774&1.2198\emx\right]$, $Q=\left[\bmx 10&0\\0&10\emx\right]$, $P=\left[\bmx 21.1284&17.3205\\17.3205&36.5955\emx\right]$, $\gamma=1$.

The parameters for the high gain observer presented in Section \ref{HGO} are: $\epsilon_h=0.1$, $h_{11}=3$, $h_{12}=3$, $h_{13}=1$, $h_{21}=2$, $h_{22}=1$, $h_{31}=2$, $h_{32}=1$. As pointed out in Section \ref{HGO}, $h_{11}$, $h_{12}$, $h_{13}$ are chosen such that the polynomial $s^3+h_{11}s^2+h_{12}s+h_{13}$ is Hurwitz, and $h_{ij}$ for $i=2,3$ and $j=1,2$ are chosen such that the polynomials $s^2+h_{i1}s+h_{i2}$ are Hurwitz for $i=2,3$. As the parameter $\epsilon_h$ is small, the convergence speed increases but when there is measurement noise, the performance of the observer is degraded \cite{Prasov_Khalil_TAC2013,AHRENS_Automatica2009}. Hence, $\epsilon_h$ should not be too small.

The parameters for the sliding mode observer presented in Section \ref{SMO} are:
$\alpha_{11}=1.1$, $\lambda_{11}=28.17$, $\alpha_2=1.1$, $\alpha_3=1$, $\tau_s=0.01$, $\delta_s=0.01$. The tuning of the parameters was shown in Section \ref{SMO}, in which $\alpha_{11}$ and $\lambda_{11}$ are chosen from (\ref{alpha11}), (\ref{lambda11}) where $p=0.5$ and $f^+=1$; $\alpha_2$ and $\alpha_3$ are chosen from (\ref{alpha2}) and (\ref{alpha3}). As $\delta_s$ converges to 0, the approximation (\ref{deltas}) becomes the ideal function $\mathrm{sign}$, which leads to high sensitivity to measurement noise. Hence, $\delta_s$ should not be too small to avoid degradation of the observer.

The parameters for the adaptive sliding mode observer presented in Section \ref{ASMO} are:
$\beta_0=1.1$, $\alpha_0=2\sqrt{2\beta_0}=2.97$, $\eta_1=0.2$, $\eta_2=0.2$, $a=0.82$, $l_0=0.4$, $r_{00}=0.4$, $r_{01}=0.5$, $r_{02}=0.5$, $\tau_a=0.01$, $\epsilon_{a1}=0.2$, $\epsilon_{a2}=0.2$, $\alpha_{a1}=0.99$, $\alpha_{a2}=0.99$, $\gamma_{a0}=200$, $\gamma_{a1}=\gamma_{a2}=300$, $\delta_{00}=\delta_{01}=\delta_{02}=0.001$, $\delta_a=0.01$. The parameters of $\beta_0$ and $\alpha_0$ are chosen according to \cite{Edwards_IJC2016} where $\alpha_0=2\sqrt{2\beta_0}$; $eta_1$ and $\eta_2$ in (\ref{AO2a}) and (\ref{AO3a}) are chosen as small numbers \cite{EDWARDS_Automatica2016}; $a$ in (\ref{deltaa0}) is chosen such as $0<a<1/\beta_0<1$  \cite{Edwards_IJC2016}; $l_0$ in (\ref{La}) and $r_{00}$ in (\ref{rhoa0}) are chosen as small positive values \cite{Edwards_IJC2016}; $r_{01}$ and $r_{02}$ in (\ref{rhoaj}) are small positive parameters \cite{EDWARDS_Automatica2016}; $\tau_a$ in lowpass filters (\ref{AO1c}), (\ref{AO2b}), (\ref{AO3b}) are chosen to be small; $\epsilon_{aj}$ and $\alpha_{aj}$ ($j=1,2$) are chosen such that $0<\alpha_{aj}<1$ and $\epsilon_{aj}>0$ to satisfy (\ref{uj}); $\gamma_{a0}$ in (\ref{dra0}) and $\gamma_{aj}$ in (\ref{draj}) ($j=1,2$) are positive; $\delta_{00}$ in (\ref{dra0})  and $\delta_{0j}$ ($j=1,2$) in (\ref{draj}) are small positive numbers; $\delta_a$ of the approximation function of the sign function in (\ref{deltaa}) is a small positive number. Similar to the sliding mode observer above, if $\delta_a$ is too close to 0, the observer will become degraded as this parameter is sensitive to measurement noise.

Note that the model under consideration is dimensionless as pointed out in Section \ref{DynamicModel}. Hence, there are no units specified on axes in the following figures.

In the first simulation, no noise affects the measurements of the system output. In Fig. \ref{r}, due to the presence of the uncertainty $\Delta_\Phi(\tau)$, $x_1$ is only able to be close to the reference signal after $\tau=8$, which demonstrates that the tracking control law is effective in producing a good tracking performance. It is shown in Fig. \ref{x} that the estimates of $x_1$, $x_2$, $x_3$, $x_4$ using the three observers converge to their true value at about $\tau=0.5$.  The estimates using the high gain observer experience peaks during transients. Fig. \ref{u} depicts the evolution of the estimates of the uncertainty $\Delta_\Phi$ and unknown inputs $u_1$ and $u_2$, which track well their true values. The estimates of the activation signals shown in Fig. \ref{a} converge to their true values. The closeness of the estimates and their true values reveals that the estimation schemes are effective in estimating the state variables and activation signals.

Next, the second simulation was conducted when the measurements were influenced by noise.  The noise affecting the measurement  signal of $x_1$ is uniformly distributed in the interval $[-0.001,0.001]$ and sampling time $T_s=0.005$. The measurements of the forces $\Phi_{S1}(x_3)$ and $\Phi_{S2}(x_4)$ are influenced by a noise profile which is a sum of a drift term of 0.001 and values uniformly distributed in the interval $[-0.001,0.001]$ with sampling time $T_s=0.005$. The estimates of $x$  in Fig. \ref{xn} look quite close to their counterparts in the noise free case (Fig. \ref{x}). Similarly, under the influence of the uncertainty $\Delta_\Phi(\tau)$, $x_1$ is close to the reference signal after $\tau=8$. The effect of measurement noise is much clearer in the evolutions of the estimates of $x_2$ in Fig. \ref{xn}. Here the estimate of $x_2$ using the adaptive sliding mode observer is slightly better than the two other observers. In Fig. \ref{un}, the estimates of $\Delta_\Phi$, $u_1$, and $u_2$ look a bit worse than in the noise free case (Fig. \ref{u}). The evolutions of the estimates of the activation signals in Fig. \ref{an} track well the true signals. It is shown that the estimates using the adaptive sliding mode observer are closest to the true values. These simulations demonstrate that our proposed estimation schemes produce reliable estimates of the state variables and activation signals in the presence of noise.

The two simulations illustrate that the three observers are comparably effective in estimating the state variables and activation signals of the dual muscle system. Note that the three observers have a lot of freedom in tuning parameters. While the adaptive sliding mode observer does not require knowledge of the bounds of the unknown inputs and uncertainty, the sliding mode observer offers more simple tuning with fewer parameters.

\section{CONCLUSIONS}\label{Conclusions}
In this paper, we have presented the agonistic-antagonistic muscle system based on the Hill muscle model. Three estimation approaches have been introduced to estimate the state variables and activation signals. The high gain observer is constructed based on recent development of the high gain estimation approach \cite{LEE_Automatica2015,LEE_SCL2016}. The sliding mode observer is designed based on the super twisting algorithm and first-order sliding mode \cite{davilaTAC2005,edwards1998book}. The adaptive sliding mode observer is developed based on dual layer adaptive sliding mode schemes presented  in \cite{EDWARDS_Automatica2016,Edwards_IJC2016}. Two numerical simulations were conducted to demonstrate the efficiency of the proposed schemes.

The traditional sliding mode observer is the most simple of the three observers with the least number of parameters but it requires the knowledge of the bounds of the uncertainty and unknown inputs. In contrast, the adaptive sliding mode observer estimates the system in an adaptive way without knowing the information of the uncertainty and unknown inputs at the cost of complexity. The high gain observer provides a flexible approach to observing the system. It was shown that the three observers are comparable through theoretical analysis and simulation results.

Our future work will investigate the estimation problem of more complicated multi-muscle multi-joint systems. In addition, experimental tests will be carried out to validate the proposed estimation schemes.

\bibliographystyle{IEEEtran}
\bibliography{IEEEabrv,Muscle_system_estimation}

\end{document}